\newcommand{\cal}[1]{\mathcal{#1}}
\newcommand{\F}{{\mathbb F}}
\numberwithin{equation}{section}
\newtheorem{theorem}{Theorem}[section]
\newtheorem{lemma}[theorem]{Lemma}
\newtheorem{proposition}[theorem]{Proposition}
\newtheorem{corollary}[theorem]{Corollary}
\theoremstyle{definition}
\newtheorem{definition}[theorem]{Definition} 
\newtheorem{remark}[theorem]{Remark}
\newtheorem{example}[theorem]{Example}
\newcommand{\rmv}[1]{}
\DeclareMathOperator{\wt}{wt}
\DeclareMathOperator{\hull}{Hull}
\DeclareMathOperator{\rk}{rk}
\DeclareMathOperator{\diag}{diag}
\begin{document}


\title[Relative hulls and quantum codes]{Relative hulls and quantum codes}

\author[S. Anderson]{Sarah E. Anderson}
\address[Sarah E. Anderson]{Department of Mathematics \\ University of St. Thomas \\ St. Paul, MN USA}
\email{ande1298@stthomas.edu}

\author[E. Camps-Moreno]{Eduardo Camps-Moreno}
\address[Eduardo Camps-Moreno]{Department of Mathematics\\ Virginia Tech\\ Blacksburg, VA USA}
\email{eduardoc@vt.edu}

\author[H. L\'opez]{Hiram H. L\'opez}
\address[Hiram H. L\'opez]{Department of Mathematics\\ Virginia Tech\\ Blacksburg, VA USA}
\email{hhlopez@vt.edu}

\author[G. Matthews]{Gretchen L. Matthews}
\address[Gretchen L. Matthews]{Department of Mathematics\\ Virginia Tech\\ Blacksburg, VA USA}
\email{gmatthews@vt.edu}

\author[D. Ruano]{Diego Ruano}
\address[Diego Ruano]{IMUVA-Mathematics Research Institute\\ Universidad de Valladolid\\ Valladolid, Spain}
\email{diego.ruano@uva.es}

\author[I. Soprunov]{Ivan Soprunov}
\address[Ivan Soprunov]{Department of Mathematics and Statistics\\ Cleveland State University\\ Cleveland, OH USA}
\email{i.soprunov@csuohio.edu}

\thanks{Hiram H. L\'opez was partially supported by the NSF grants DMS-2201094 and DMS-2401558. Gretchen L. Matthews was partially supported by NSF DMS-2201075 and the Commonwealth Cyber Initiative. Diego Ruano was partially supported by Grant RYC-2016-20208 funded by MCIN/AEI/10.13039/501100011033 and by ``ESF Investing in your future,'' by Grant TED2021-130358B-I00 funded by MCIN/AEI/10.13039/501100011033 and by the ``European Union NextGenerationEU/PRTR,'' and by QCAYLE project funded by MCIN, the European Union NextGenerationEU (PRTR C17.I1) and Junta de Castilla y Le\'on.}
\keywords{ Hull; Entanglement-assisted quantum error-correcting codes; CSS construction}
\subjclass[2010]{94B05;  81P70;  11T71; 14G50}

\begin{abstract}
Given two $q$-ary codes $C_1$ and $C_2$, the relative hull of $C_1$ with respect to $C_2$ is the intersection $C_1\cap C_2^\perp$. We prove that when $q>2$, the relative hull dimension can be repeatedly reduced by one, down to a certain bound, by replacing either of the two codes with an equivalent one. The reduction of the relative hull dimension applies to hulls taken with respect to the $e$-Galois inner product, which has as special cases both the Euclidean and Hermitian inner products. We give conditions under which the relative hull dimension can be increased by one via equivalent codes when $q>2$. We study some consequences of the relative hull properties on entanglement-assisted quantum error-correcting codes and prove the existence of new entanglement-assisted quantum error-correcting maximum distance separable codes, meaning those whose parameters satisfy the quantum Singleton bound.
\end{abstract}

\maketitle

\section{Introduction}
Let $C$ be a linear code over a finite field $\F_q$. The hull of $C$ is defined by $\hull(C)=C \cap C^{\perp}$, where $C^{\perp}$ is the dual of $C$ taken with respect to the Euclidean inner product. Carlet, Mesnager, Tang, Qi, and Pellikaan proved in the seminal paper~\cite{lcd} that when $q>3$, the dimension of $\hull(C)$ can be reduced to zero, meaning that there exists {a code $C^\prime$ monomially equivalent to} $C$ such that $\dim \hull(C^\prime)=0.$ Thus, one can reduce the hull of a code when $q>3$ without changing their main parameters: length, dimension, and minimum distance.

Luo, Ezerman, Grassl, and Ling went further in~\cite{Grassl22} to prove that when $q>2$, the dimension of the Hermitian hull $\hull_h(C)=C \cap C^{\perp_h}$, where $C^{\perp_h}$ is the Hermitian dual of $C$, can be reduced to zero one by one in the sense that if $\dim \hull_h(C)>0$, then there exists {a code $C^\prime$ monomially equivalent to} $C$ such that $\dim \hull_h(C^\prime)=\dim \hull_h(C)-1$. A slight modification reveals the same result for the hull of $C$ (taken with respect to the Euclidean inner product) when $q>3$. Therefore, there exists a sequence of monomially equivalent codes $C_0,C_1,\dots, C_t=C$ such that  $\dim \hull(C_i)=i$, where $t=\dim \hull(C)$. How equivalent codes can change the hull is also studied in~\cite{10006387}.

It is well known that self-orthogonal codes with respect to the Hermitian inner product may be used to construct quantum error-correcting codes \cite{ashikhmin_knill, CRSS, ketkar}. Entanglement allows one to remove restrictions on the relationship between a code and its dual. Hence, any linear code (not necessarily self-orthogonal) may be used to define a quantum code~\cite{brun_science}. One may also use two codes $C_1, C_2 \subseteq \F_q^n$ satisfying $C_2^{\perp} \subseteq C_1$ via the now famous CSS construction \cite{CS, Steane}.  In the case of the construction of entanglement-assisted quantum error-correcting codes using linear codes $C_1, C_2 \subseteq \F_q^n$, the required number of pairs of maximally entangled qudits is given by the parameter $c = \dim (C_1) - \dim (C_1\cap C_2^\perp$) \cite{wilde}. Therefore, a key ingredient for computing $c$ is $C_1\cap C_2^\perp$, which we call the relative hull. More explicitly, the {\it relative hull of $C_1$ with respect to $C_2$} is
$$\hull_{C_2}(C_1) = C_1\cap C_2^\perp.$$
Note that the {hull} of $C$ is $\hull(C)=\hull_{C}(C)$.

In this paper, we study how equivalent codes change the relative hull. Specifically, we look for {codes $C'_1$ and $C'_2$ equivalent to} $C_1$ and $C_2$, respectively, such that the dimension of $\hull_{C'_2}(C'_1)$ is larger or smaller than that of $\hull_{C_2}(C_1)$. We first show that to increase or decrease the relative hull dimension, we only need to find an equivalent code for one of the codes. Then, we show that the relative hull with respect to Galois inner products \cite{fan_zhang,liu_pan} (which include the Euclidean and Hermitian inner products as particular cases) can be computed in terms of the Euclidean inner product, justifying the focus on the classical Euclidean inner product in this work. One of the main results of this paper is Theorem~\ref{22.10.29}, where we show that we can successively decrease the dimension of the relative hull by one via equivalent codes when $q>2$. We provide a similar result for $e$-Galois hulls. As a corollary, we can recover the analogous result in \cite{lcd} for the Euclidean inner product and in \cite{Grassl22} for the Hermitian inner product as special cases.

This paper also concerns increasing the relative hull dimension. Proposition~\ref{23.06.13} gives an upper bound for the dimension of $\hull_{C_2}(C_1)$, which sometimes also is an upper bound for $\dim \hull_{C_2^\prime}(C_1^\prime)$ for any {codes $C'_1$ and $C'_2$ equivalent to} $C_1$ and $C_2$. Theorem \ref{22.11.05} shows we can successively increase the dimension of $\hull_{C_2}(C_1)$ by one via equivalent codes up to the upper bound given in Proposition~\ref{23.06.13} when $q >2$. 

Another primary goal is to apply our results to quantum error-correcting codes. We use the standard notation $[[n,\kappa,\delta;c]]_q$ to mean that a quantum code $Q$ is a $q$-ary entanglement-assisted quantum error-correcting code (EAQECC) that encodes $\kappa$ logical qudits into $n$ physical qudits with the help of $n - \kappa - c$ ancillas and $c$ pairs of maximally entangled qudits. The {\it rate} $\rho$ and {\it net rate} ${\overline{\rho}}$ of $Q$ are respectively defined by
\[\rho:= \frac{\kappa}{n}, \qquad\qquad\qquad \overline{\rho}:= \frac{\kappa-c}{n}.\]
As stated, the relative hull dimension is linked to the required number of pairs of maximally entangled quantum states for an EAQECC. Our results concerning the relative hull demonstrate how monomially equivalent codes may be used to tailor the parameter $c$ within the specified bounds. Thus, we can reduce the required number of pairs of maximally entangled quantum states while maintaining the net rate. Hence, one has a simpler implementation with the same net rate. We show that if a quantum code obtained via the CSS construction using $C_1$ and $C_2$ is pure, then the minimum distance of the quantum code obtained via the CSS construction of some linear codes monomially equivalent to $C_1$ and $C_2$ does not decrease. Furthermore, we give conditions to obtain a pure quantum code using monomially equivalent codes. We obtain EAQECCs codes with excellent parameters by applying Theorem 3.3 to multivariate Goppa codes, filling in some gaps or improving the parameters of some of the best-known EAQECCs recently published by L. Sok~\cite{LSok}. We obtain new EAQMDS (EAQECCs whose parameters achieve the Singleton bound, so-called entanglement-assisted quantum maximum distance separable codes), by applying Theorem 3.3 to (possibly extended or double extended) generalized Reed-Solomon codes when $q>2$, $1 < n < q+1$, and $k \le n+2$. 

This paper is organized as follows. Preliminaries are given in Section \ref{prelim_section}. Section \ref{reducerelativehull}
provides results on reducing the relative hull while Section \ref{increaserelativehull}
discusses increasing the relative hull. Applications to the design of entanglement-assisted quantum error-correcting codes are in Section \ref{quantum_section}. The paper ends with a conclusion in Section \ref{conclusion_section}.

\section{Preliminaries} \label{prelim_section}

This section provides a foundation for the rest of the paper in terms of preliminary results and notation. Subsection~\ref{22.12.24} explores the relative hull with respect to the usual (Euclidean) inner product. Subsection~\ref{22.12.25} introduces the $e$-Galois relative hull, the relative hull with respect to the more recently introduced Galois inner products, among which we find the Hermitian inner product. Subsection~\ref{22.12.25} also proves that the $e$-Galois relative hulls are particular cases of the relative hulls with respect to the usual inner product. Subsection~\ref{22.12.26} reviews the primary constructions of quantum error-correcting codes used in this paper and links them to relative hulls. 

\subsection{Relative hulls and code equivalence}\label{22.12.24}
Let $\mathbb{F}_q$ be the finite field with $q$ elements. The multiplicative group $\mathbb{F}_q\setminus \{0\}$ is denoted by $\F_q^\ast$. For ${c}\in\mathbb{F}_q^n$, we denote by $\wt({c})$ the (Hamming) {\it weight} of $c$, which is the number of nonzero entries of $c$. For $S\subseteq\mathbb{F}_q^n$, we denote by $\wt(S)$ the {\it minimum} of the weights of the elements of $S\setminus\{0\}$. A {\it linear code} $C$ over $\F_q$ of length $n$ is a vector subspace of $\mathbb{F}_q^n$; we may say {\it code} for short because we only deal with linear codes. An $[n,k,d]_q$-code is a linear code over $\F_q$ of length $n$, dimension $k$ as an $\F_q$-subspace, and {\it minimum distance} $d(C) = \wt(C)$; we sometimes refer to such a code as an $[n,k]_q$-code if the minimum distance is irrelevant to the discussion. The {\it Euclidean dual} of $C$ is denoted and defined by
$$C^\perp = \left\{ {x} \in \mathbb{F}_q^n \mid {x} \cdot {c} = 0 \text{ for all } {c} \in C \right\},$$ 
where ${x} \cdot {c} = \sum_{i=1}^nx_ic_i$ is the {\it Euclidean inner product}.
Recall that $\hull(C)=C \cap C^{\perp}$. We say that $C$ is {\it self-orthogonal} if $\hull(C)=C$ and that $C$ is {\it linear complementary dual} (LCD) if $\hull(C)=\{0\}$. The set of $m \times n$ matrices with entries in $\F_q$ is denoted by $\F_q^{m \times n}$, and $\rk(M)$ denotes the rank of a matrix $M \in \F_q^{m \times n}$.  The {\it kernel} of $G \in \F_{q}^{k\times n}$ is $\ker (G) = \left\{ {x} \in \F_{q}^{n} \mid G{x}^T = 0 \right\}$. The $j$-th standard basis vector of $\mathbb{F}_q^{n}$ is $\bm{e}_j = (0,\dots, 0,1,0, \dots 0)$ where the only nonzero entry is in the $j$-th coordinate.

\begin{definition}\label{22.10.03}
Let $C_1$ and $C_2$ be two codes of the same length over $\F_q$.
We define the {\it relative hull of $C_1$ with respect to $C_2$} as
$$\hull_{C_2}(C_1) = C_1\cap C_2^\perp.$$
The {\it hull} of $C_1$ is $\hull(C_1) = \hull_{C_1}(C_1)$.
\end{definition}

Let $x$ be an element of $\hull_{C_1}(C_2)=C_1^\perp\cap C_2$ and $c$ an element of $\hull_{C_2}(C_1) = C_1\cap C_2^\perp$. As ${x} \cdot {c} = 0$, we conclude that $\hull_{C_1}(C_2) \subseteq \left( \hull_{C_2}(C_1) \right)^\perp$ (note that $(A \cap B)^\perp = A^\perp + B^\perp$). In particular, $\hull(C)$ is a self-orthogonal code for any linear code $C$. Note that $\hull(C_1) \subseteq \hull_{C_2}(C_1)$ if $C_2 \subseteq C_1$ and $\hull_{C_2}(C_1) \subseteq \hull(C_1)$ if  $C_1 \subseteq C_2$.

The following result presents some basic properties of the relative hull.

\begin{proposition}\label{Prop:basic}
Let $C_i$ be an $[n,k_i]_q$-code with generator matrix $G_i$ for $i=1,2$. The following hold: \newline
\noindent {\rm (i)} $\hull_{C_2}(C_1) = \left\{ {x}G_1 \mid {x} \in \ker (G_2G_1^T) \right\}$, \newline
\noindent {\rm (ii)} $\dim\hull_{C_2}(C_1)=k_1-\rk(G_2G_1^T),$ and \newline
\noindent {\rm (iii)} $k_1-\dim\hull_{C_2}(C_1)=k_2-\dim\hull_{C_1}(C_2).$
\end{proposition}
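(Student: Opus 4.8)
The plan is to reduce all three statements to elementary linear algebra about the matrix product $G_2 G_1^T$, exploiting that $G_1$ is a generator matrix and hence has full row rank $k_1$. For part (i), I would characterize membership in $\hull_{C_2}(C_1) = C_1 \cap C_2^\perp$ directly. Every codeword of $C_1$ is uniquely of the form $xG_1$ with $x \in \F_q^{k_1}$, since the rows of $G_1$ span $C_1$ and are linearly independent. Such a word lies in $C_2^\perp$ exactly when it is orthogonal to every row of $G_2$, that is, when $G_2 (xG_1)^T = \bm{0}$. Using $(xG_1)^T = G_1^T x^T$, this reads $(G_2 G_1^T) x^T = \bm{0}$, i.e. $x \in \ker(G_2 G_1^T)$ in the paper's convention. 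This yields the claimed description.

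For part (ii), I would observe that the assignment $x \mapsto xG_1$ is a linear map $\F_q^{k_1} \to C_1$ that is injective, because $\rk(G_1) = k_1$ forces $xG_1 = \bm{0}$ to imply $x = \bm{0}$. Restricted to $\ker(G_2 G_1^T)$, it is therefore a linear isomorphism onto $\hull_{C_2}(C_1)$ by part (i), so $\dim \hull_{C_2}(C_1) = \dim \ker(G_2 G_1^T)$. Applying the rank--nullity theorem to the $k_2 \times k_1$ matrix $G_2 G_1^T$, whose kernel sits in $\F_q^{k_1}$, gives $\dim \ker(G_2 G_1^T) = k_1 - \rk(G_2 G_1^T)$, as desired.

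For part (iii), I would apply (ii) twice. It gives $k_1 - \dim \hull_{C_2}(C_1) = \rk(G_2 G_1^T)$, and, after swapping the roles of the two codes, $k_2 - \dim \hull_{C_1}(C_2) = \rk(G_1 G_2^T)$. Since $G_1 G_2^T = (G_2 G_1^T)^T$ and a matrix has the same rank as its transpose over any field, the two right-hand sides coincide, which establishes the identity.

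None of these steps presents a genuine obstacle; the statement is essentially a bookkeeping exercise. The only points requiring care are keeping track of the transpose conventions, so that the orthogonality condition $G_2(xG_1)^T = \bm{0}$ matches the kernel $\ker(G_2 G_1^T)$ defined via $G x^T = \bm{0}$, and confirming that $x \mapsto xG_1$ is injective on the relevant subspace so that the dimension transfers without loss. The single external fact invoked is the equality $\rk(M) = \rk(M^T)$ used in (iii), which holds over an arbitrary field.
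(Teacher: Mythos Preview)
Your proof is correct and follows essentially the same approach as the paper's own argument: characterize membership in $C_1\cap C_2^\perp$ via the kernel of $G_2G_1^T$, use injectivity of $x\mapsto xG_1$ to transfer dimensions, and then invoke $\rk(G_2G_1^T)=\rk(G_1G_2^T)$ for part (iii). The only cosmetic difference is that the paper writes part (i) as two separate inclusions rather than a single biconditional, but the content is identical.
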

\begin{proof}
{\rm (i)} $(\subseteq)$
If ${c} \in \hull_{C_2}(C_1)=C_1\cap C_2^{\perp},$ then ${c}={x}G_1$ for some ${x}\in \mathbb{F}_q^{k_1}$ and $G_2{c}^T=~0$. Hence, $G_2G_1^T{x}^T= 0,$ which means that ${x} \in \ker (G_2G_1^T).$
We conclude that ${c} \in \left\{ {x}G_1 \mid {x} \in \ker (G_2G_1^T) \right\}.$

$(\supseteq)$ If ${c} \in \left\{ {x}G_1 \mid {x} \in \ker (G_2G_1^T) \right\}$ then there is ${x} \in \ker (G_2G_1^T)$ such that ${c}={x}G_1$ indicating that ${c}\in C_1$. Furthermore, $G_2{c}^T = G_2G_1^T{x}^T=0$, demonstrating that ${c} \in C_2^\perp$. Thus, ${c}\in C_1\cap C_2^\perp = \hull_{C_2}(C_1).$

{\rm (ii)} The matrix $G_1 \in \mathbb{F}_q^{k_1\times n}$ has rank $k_1$, so it defines the injective transformation $T_{G_1} \colon \mathbb{F}_q^{k_1} \to \mathbb{F}_q^{n}$ given by ${x} \mapsto {x}G_1$. Combining this fact with (i) shows 
$$\begin{array}{lcl}
\dim \hull_{C_2}(C_1) &=& \dim \left\{ {x}G_1 \mid {x} \in \ker (G_2G_1^T) \right\}\\ &=&\dim \left\{ {x} \mid {x} \in \ker (G_2G_1^T) \right\} \\& =& \dim \ker (G_2G_1^T) = k_1-\rk(G_2G_1^T). \end{array}$$

{\rm (iii)} This is a consequence of $\rk(G_2G_1^T)=\rk(G_1G_2^T)$ and (ii).
\end{proof}

A \textit{monomial matrix} is an invertible matrix with rows of weight one. If all nonzero entries of a monomial matrix are ones, it is called a {\it permutation matrix}.
\begin{definition}
Two codes $C$ and $C^\prime$ over $\F_q$ of the same length are {\it monomially equivalent}, or {\it equivalent} for short, if there exists a monomial matrix $M$ such that
$$C^\prime=CM=\{{c}M\ |\ {c}\in C\}.$$
\end{definition}

In fact, according to MacWilliams' theorem, every isometry on $\F_q^n$ with respect to the Hamming metric is given by a monomial matrix~\cite[Theorem 4]{macwilliams1960}. As monomial equivalence preserves the weight distributions, equivalent codes have the same basic parameters: length, dimension, and minimum distance. It is easy to see that the duals of equivalent codes are equivalent. More precisely, $C$ and $C'$ are equivalent with $C^\prime=CM$ if and only if
${C^\prime}^\perp$ and $C^\perp$ are equivalent with ${C^\prime}^\perp=C^\perp P D^{-1}$, where  $M=PD$, $P$ is a permutation matrix, and $D$ is a nonsingular diagonal matrix. 

Given two codes $C_1, C_2 \subseteq \F_q^n$, we aim to find equivalent codes that define a relative hull of dimension that is increased or decreased by one from that of the hull of the original codes and then proceed iteratively. More precisely, we are looking for codes $C_1^\prime$ and $C_2^\prime$ equivalent to $C_1$ and $C_2$, respectively, such that $\dim \hull_{C^\prime_2}(C^\prime_1) = \dim \hull_{C_2}(C_1)+1$ or $\dim \hull_{C^\prime_2}(C^\prime_1)=\dim \hull_{C_2}(C_1)-1$. The following observation shows that modifying only one of the codes is enough to increase or decrease the relative hull dimension. In other words, when we look for codes $C_1^\prime$ and $C_2^\prime$  equivalent to $C_1$ and $C_2$ such that $\dim \hull_{C^\prime_2}(C^\prime_1)=\dim \hull_{C_2}(C_1)+1$ or $\dim \hull_{C^\prime_2}(C^\prime_1)=\dim \hull_{C_2}(C_1)-1$, we can always take $C_2^\prime = C_2$.
\begin{proposition}\label{22.11.01}
If $C_i \subseteq \F_q^n$ is a code and $M_i \in \F_q^{n \times n}$ is a monomial matrix for $i=1,2$, then
$$\dim \hull_{C_2M_2}(C_1M_1)=\dim \hull_{C_2M}(C_1)=\dim \hull_{C_2}(C_1M^T),$$
where $M = M_2M_1^T$.
\end{proposition}
\begin{proof}
Let $G_1$ and $G_2$ be generator matrices for $C_1$ and $C_2$, respectively.
By Proposition~\ref{Prop:basic} (ii),
$$
\dim \hull_{C_2M_2}(C_1M_1)=k_1-\rk(G_2M_2(G_1M_1)^T)=k_1-\rk(G_2MG_1^T)=\dim \hull_{C_2M}(C_1).
$$
 Noting that $G_2MG_1^T=G_2(G_1M)^T$, we also see that
$$
\dim \hull_{C_2M}(C_1)=\dim \hull_{C_2}(C_1M^T),
$$
which proves the assertion.
\end{proof}

\subsection{Hermitian and Galois relative hulls}\label{22.12.25}
In~\cite{fan_zhang}, Fan and Zhang introduced the Galois inner products, a generalization of the Euclidean and Hermitian inner products, and found self-orthogonal codes with respect to the new inner product. The Galois inner products were further studied to build LCD codes \cite{liu_pan} and to get new families of quantum codes with a broader range of parameters (see, for example, \cite{mdsgalois, eaqec_lcd}). This section reviews the Galois inner products and the relative hulls with respect to them. It also demonstrates why, for our purposes, it is sufficient to focus on the classical Euclidean relative hull (rather than these more general Galois relative hulls).

Consider the finite field $\F_q$, where $q=p^m$ for a prime $p$ and a positive integer $m$. 
For any integer $e$ such that $0 \leq e < m$, the {\it $e$-Galois inner} product for ${x}, {y} \in \F_q^n$ is given by 
$${x} \cdot_e {y} = \sum_{i=1}^nx_iy_i^{p^e} \in \F_q.$$
Taking $e=0$ recovers the Euclidean inner product in $\F_{q}^n$. Taking $e=\frac{m}{2}$ when $m$ is even produces the usual Hermitian inner product in $\F_{q}^n$ that is denoted by ${x} \cdot_h {y}$. The {\it $e$-Galois dual} of a code $C \subseteq \F_q^n$ is defined by
$$C^{\perp_e} = \left\{ {x} \in \mathbb{F}_{q}^n \mid {x} \cdot_e {c} = 0, \text{ for all } {c} \in C \right\}.$$
The {\it Hermitian dual} is denoted by $C^{\perp_h}$.
Given two codes $C_1$ and $ C_2$ over $\F_q$, we define the {\it $e$-Galois relative hull} of $C_1$ with respect to $C_2$ as
$$\hull^e_{C_2}(C_1) = C_1\cap C_2^{\perp_e}.$$
We denote the {\it Hermitian relative hull} by $\hull^h_{C_2}(C_1)$. The $e$-Galois relative hulls $\hull^e_{C_1}(C_1)$ and $\hull^h_{C_1}(C_1)$ are denoted respectively by $\hull_e(C_1)$ and $\hull_h(C_1)$.

Given a code $C \subseteq \F_q^n$, consider the code
 $$C^{p^e} =\{(c_1^{p^e},\ldots,c_n^{p^e})\ |\ (c_1,\ldots,c_n)\in C\}.$$ Since the map $\mathbb{F}_{q} \rightarrow \mathbb{F}_{q}: x\mapsto x^{p^e}$ is bijective, we have that if $G=[a_{ij}] \in \F_{q}^{k\times n}$ is a generator matrix of $C$, then $G^{p^e}=[a_{ij}^{p^e}] \in \F_{q}^{k\times n} $ is a generator matrix of $C^{p^e}$. Moreover, 
 $$C^{\perp_e} = (C^{p^e})^\perp.$$
 Thus,
\begin{equation} \label{e_Gal_red}
\hull^e_{C_2}(C_1) = \hull_{C_2^{p^e}}(C_1) \qquad \text{ and } \qquad \hull_e(C)=\hull_{C^{p^e}}(C).\end{equation}
Consequently, to consider the relative hull of a code $C_1$ with respect to $C_2$ and any $e$-Galois inner product, it suffices to consider the relative hull of $C_1$ with respect to $C_2':=C_2^{p^e}$ and the Euclidean inner product.

\rmv{In this work, we are interested, for the most part, in linear isometries. However, if we want to consider non-linear isometries $f$ of $C_2\subseteq\mathbb{F}_q^n$, a desirable feature would be that $f(C_2)$ is a linear code to use the CSS construction. It is known (see \cite{nonlinisom}) that any non-linear isometry that maps subspaces into subspaces is a semilinear isometry, meaning there exists $\lambda\in(\mathbb{F}_q^\ast)^n$, a permutation matrix $P$, and an automorphism $\sigma$ of $\mathbb{F}_q$, such that
$$f(c_1,\ldots,c_n)=(\lambda_1\sigma(c_1),\ldots,\lambda_n\sigma(c_n))P.$$
Thus, considering the $e$-Galois hulls, we will have considered all the isometries of $C_2$ whose image is still a linear code.}

\subsection{Quantum codes}\label{22.12.26}
A series of works in the 1990s showed how a self-orthogonal code or two linear codes subject to a dual-containment constraint give rise to quantum error-correcting codes. Since then, many quantum codes in the literature have relied on the dual of a code. In 2006, Brun, Devetak, and Hsieh~\cite{brun_science} demonstrated that the duality requirement could be removed by using the entanglement, paving the way for any linear code or pair of linear codes to design Entanglement-Assisted Quantum Error-Correcting Codes (EAQECCs). The cost of the pre-shared entanglement can affect the analysis of the performance of a code. Thus, looking for constructions with different required numbers of pairs of maximally entangled qudits is valuable. Moreover, EAQECCs have been used recently for secret sharing \cite{shibatamatsumoto}. Building on the work of Wilde and Brun~\cite{wilde}, Guenda, Jitman, and Gulliver~\cite{guenda} showed that the dimension of the hull of the linear code could capture the necessary entanglement. In this subsection, we review the concepts from the recent work~\cite{GHMR19, GaHeMaRu} that motivate the remainder of this paper.

Recall that the standard notation $[[n,\kappa,\delta;c]]_q$ describes a quantum code $Q$ that is a $q$-ary EAQECC that encodes $\kappa$ logical qudits into $n$ physical qudits with the help of $n - \kappa - c$ ancillas and $c$ pairs of maximally entangled qudits; the code is able to detect any error affecting at most $d-1$ of the physical qudits. If for any error $E$ affecting less than $d$ qudits, we have $v^T E u=0$ for any $v,u\in Q$, we say that $Q$ is pure.

There are several constructions of EAQECCs using linear codes. For example, we have the following two classical constructions using the Euclidean and the Hermitian inner products.

\begin{theorem}[CSS construction, {\cite[Theorem~4]{GHMR19}}]\label{22.10.01}
If $C_i$ is an $[n,k_i]_q$-code for $i=1,2$, then there exists an $[[n,\kappa,\delta;c]]_q$-quantum code $Q$ with
        $$c=k_1-\dim\hull_{C_2}(C_1),\qquad \kappa=n-k_1-k_2+c,\quad $$
\begin{equation*}
        \text{and} \qquad \delta = \begin{cases}
                        \min\left\{d(C_1^\perp),\ d(C_2^\perp)\right\} &\text{if $C_1^\perp \subseteq C_2$}  \\
                        \min\left\{\wt \left(C_1^\perp\setminus \hull_{C_1}(C_2)\right),
                        \wt \left(C_2^\perp\setminus \hull_{C_2}(C_1)\right)\right\} &\text{otherwise}.
                    \end{cases}
\end{equation*}  
Moreover, if $\delta=\min\{d(C_1^\perp),d(C_2^\perp)\}$, then $Q$ is pure. 
\end{theorem}

\begin{theorem}[Hermitian construction, {\cite[Theorem~3]{GHMR19}}]\label{22.10.02}
    If $C$ is an $[n,k]_{q^2}$-code, then there exists an $[[n,\kappa,\delta;c]]_q$-quantum code $Q$ with
    $$c=k-\dim\hull_h(C),\qquad \kappa=n-2k+c,\quad $$
\begin{equation*}
        \text{and} \qquad \delta = \begin{cases}
                        d(C^{\perp_h}) &\text{if $C^{\perp_h} \subseteq C$}  \\
                        \min \left\{ \wt(C^{\perp_h} \setminus \hull_h(C)) \right\} &\text{otherwise}.
                    \end{cases}
\end{equation*}
Moreover, if $\delta=d(C^{\perp_h})$, then $Q$ is {pure}. 
\end{theorem}

The following Singleton-type bound holds for the CSS and Hermitian constructions.

\begin{theorem}[Singleton-type bound \cite{Grassl22}]
If $Q$ is an $[[n,\kappa,\delta; c]]_q$-quantum code obtained via the CSS or the Hermitian construction, then
$$2\delta+\kappa\leq n+c+2.$$
\end{theorem}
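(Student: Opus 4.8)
The plan is to prove the Singleton-type bound $2\delta + \kappa \le n + c + 2$ by reducing both the CSS and Hermitian constructions to the classical Singleton bound applied to a suitably chosen auxiliary code. The essential observation is that in each construction the quantum minimum distance $\delta$ is controlled by the weight of a code of a known length and dimension, so that the ordinary Singleton bound $d \le n - k + 1$ for an $[n,k,d]_q$-code becomes available once the right code is identified.

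First, consider the CSS construction of Theorem~\ref{22.10.01}. Here $\delta = \min\{\wt(C_1^\perp \setminus C_2),\, \wt(C_2^\perp \setminus C_1)\}$, so in particular $\delta \le \wt(C_1^\perp \setminus C_2) \le \wt(C_1^\perp \setminus (C_1^\perp \cap C_2))$, and likewise $\delta \le \wt(C_2^\perp \setminus (C_2^\perp \cap C_1))$. I would focus on the larger of the two duals and apply the Singleton bound to it: since $C_1^\perp$ is an $[n, n-k_1]_q$-code, the usual Singleton bound gives $\wt(C_1^\perp) \le k_1 + 1$. The task is then to relate $\kappa = n - k_1 - k_2 + c$ and $c = k_1 - \dim(C_1 \cap C_2^\perp)$ to these parameters. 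Substituting $c$ into $\kappa$ yields $\kappa = n - k_2 - \dim(C_1 \cap C_2^\perp)$, which I would rewrite in terms of the relative hull using Proposition~\ref{Prop:basic}. The goal is to bound $2\delta + \kappa$ from above; plugging the Singleton estimate for $\delta$ and simplifying should collapse the $n$, $k_1$, $k_2$, and relative-hull terms into exactly $n + c + 2$.

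For the Hermitian construction of Theorem~\ref{22.10.02}, the argument is analogous but one-sided. Here $\delta = \wt(C^{\perp_h} \setminus (C \cap C^{\perp_h}))$, so $\delta \le \wt(C^{\perp_h}) \le \dim(C) + 1$ by the Singleton bound applied to the $[n, k]_{q^2}$-code $C$ (since $C^{\perp_h}$ has length $n$ and dimension $n - k$, and $\wt(C^{\perp_h}) \le n - (n-k) + 1 = k + 1$). Using $c = k - \dim(C \cap C^{\perp_h})$ and $\kappa = n - 2k + c$, I would again substitute and combine with the weight bound $\delta \le k + 1$, checking that $2\delta + \kappa \le 2(k+1) + (n - 2k + c) = n + c + 2$, which lands on the nose.

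The main obstacle, as often with such bounds, is not the arithmetic but ensuring that the weight bound on the relevant punctured/coset set is legitimate. The quantity $\wt(C^\perp \setminus C')$ counts weights only over a set-theoretic difference, not a subcode, so $\wt(C^\perp \setminus C') \ge \wt(C^\perp)$ in general, which points the inequality the wrong way. The delicate step is therefore to verify that the correct direction of comparison holds and that the Singleton bound is applied to a genuine linear code whose minimum distance lower-bounds $\delta$ appropriately; I would handle this by arguing that any codeword realizing the minimum in the difference set still lies in a linear code of the stated dimension, or by passing to the appropriate quotient so that the Singleton bound applies directly to the dimension contributing to $\delta$. Once that comparison is pinned down, the two cases close by the substitutions above.
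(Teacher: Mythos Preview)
The paper does not prove this theorem; it is stated as a citation from \cite{Grassl22} with no accompanying argument. So there is no ``paper's own proof'' to compare against, and your proposal stands or falls on its own.

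Your overall plan---reduce to the classical Singleton bound via the parameters of $C^{\perp_h}$ (respectively $C_1^\perp$ and $C_2^\perp$)---is the right one, and the arithmetic you sketch at the end of the Hermitian case is exactly what is needed: once you know $\delta\le k+1$, the substitution $2\delta+\kappa\le 2(k+1)+(n-2k+c)=n+c+2$ finishes immediately, and similarly $\delta\le k_1+1$, $\delta\le k_2+1$ give $2\delta\le k_1+k_2+2$ in the CSS case. The problem is the step you yourself flag: the inequality $\delta\le\wt(C^{\perp_h})$ is backwards, since removing codewords from a set can only \emph{raise} the minimum weight. Your proposed fix (``any codeword realizing the minimum in the difference set still lies in a linear code of the stated dimension'') does not help, because Singleton bounds the \emph{minimum} weight of that code, not the weight of a particular element.

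What actually closes the gap is the relative Singleton bound: for any linear code $E\subseteq\F_q^n$ and any proper subcode $D\subsetneq E$, one has $\wt(E\setminus D)\le n-\dim E+1$. The proof is short: choose an information set $I$ of size $k=\dim E$; the corresponding systematic generator matrix has $k$ rows, each of weight at most $n-k+1$, and since $\dim D<k$ at least one of these rows lies in $E\setminus D$. Applying this with $E=C^{\perp_h}$ and $D=C\cap C^{\perp_h}$ gives $\delta\le n-(n-k)+1=k+1$ in the Hermitian case; applying it with $E=C_i^\perp$ and $D=C_i^\perp\cap C_j$ gives $\delta\le k_i+1$ for $i=1,2$ in the CSS case. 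With this lemma in hand your substitutions go through verbatim, but as written the proposal leaves precisely this step unproved.
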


\begin{remark}
Let $C_i$ be an $[n,k_i]_q$-code with generator matrix $G_i$ for $i=1,2$. Note that Proposition~\ref{Prop:basic} (ii) implies that if $Q$ is a quantum code constructed via the CSS construction using the codes $C_1$ and $C_2$, then the parameter $c$, the required number of pairs of maximally entangled quantum states, can be seen in terms of the generator matrices:
$$c=\rk(G_2G_1^T)=\rk(G_1G_2^T).$$

This implies that swapping the role of $C_1$ and $C_2$ does not affect the parameters of the resulting quantum code.
\end{remark}

\section{Reducing the relative hull}
\label{reducerelativehull}
Let $C_i$ be an $[n,k_i]_q$-code for $i=1,2$. This section aims to repeatedly reduce the relative hull dimension $\hull_{C_2}(C_1)$ by one using equivalent codes. We use the phrase reduce the (dimension of the) relative hull to mean to determine equivalent codes that define a relative hull of dimension less than that of the original codes. According to Proposition~\ref{22.11.01}, we only need to find an equivalent code for one of the linear codes. Thus, we seek a code $C_2^\prime$ equivalent to $C_2$ such that $\dim \hull_{C_2^\prime}(C_1)=\dim \hull_{C_2}(C_1)-1$.

For any ${\lambda}=(\lambda_1,\ldots,\lambda_n)\in(\mathbb{F}_q^\ast)^n$, we define the diagonal matrix $D_{\lambda}=\diag(\lambda_1,\ldots,\lambda_n)$. Let $C \subseteq \F_q^n$ be a code and $S_n$ the symmetric group on $n$ symbols. If $\sigma\in S_n$, the image of $C$ obtained by permuting the entries of every codeword according to $\sigma$ is denoted by $C^\sigma$. The permutation matrix associated with $\sigma$ is denoted by $P_\sigma$.

\begin{remark}\label{22.11.04}
Note that $C^\sigma=\left\{{c}P_\sigma \ | \ {c}\in C\right\}.$ Any monomial matrix $M$ is of the form $M=D_{\lambda} P_\sigma$, for some ${\lambda} \in(\mathbb{F}_q^\ast)^n$ and some permutation $\sigma\in S_n.$ Thus, any code $C^\prime$ monomially equivalent to $C$ is of the form $C^\prime = CD_{\lambda} P_\sigma$.
\end{remark} 
When equivalent codes reduce the dimension of the relative hull, the following lemma specifies how much the dimension can be reduced.

\begin{lemma}\label{22.11.02}
Let $C_i$ be an $[n,k_i]_q$-code for $i=1,2$. If $C_2^\prime$ is equivalent to $C_2$, then
$$\dim \hull_{C_2^\prime}(C_1)  \geq \max \{ 0, k_1-k_2 \}.$$   \end{lemma}
\begin{proof}
By Remark~\ref{22.11.04}, there exists a monomial matrix $M$ such that
$C_2' = C_2M$. Let $G_1$ and $G_2$ be generator matrices of $C_1$ and $C_2$, respectively.
By Proposition~\ref{Prop:basic} (ii), $\dim \hull_{C'_2}(C_1) = k_1 - \rk(G_2 M G_1^T)$. The result follows as $G_2 M G_1^T$ is a $k_2 \times k_1$ matrix.
\end{proof}

Lemma \ref{22.11.02} indicates that the dimension of the relative hull of a code $C_1$ with respect to $C_2$ can be reduced (at most) to the difference in dimensions of the two codes, in the case that the difference is nonnegative, by replacing $C_1$ with an equivalent code.

One of the main results of this section proves that one can repeatedly decrease the dimension of the relative hull by one until it equals the lower bound given by Lemma~\ref{22.11.02}. 

Recall that the tensor product of matrices $A=[a_{ij}] \in \F_{q}^{r\times n}$ and $B \in \F_{q}^{m_1\times m_2}$ is the matrix that is expressed in block form as
\[ A\otimes B = \left(\begin{array}{cccc} a_{11}B & a_{12}B & \cdots & a_{1n}B
\\ a_{21}B & a_{22}B & \cdots & a_{2n}B \\
\vdots & \vdots &  & \vdots \\
a_{r1}B & a_{r2}B & \cdots & a_{rn}B \\
\end{array}\right) \in {\F_{q}}\!^{rm_1\times nm_2}.
\]

For any two matrices $A \in \mathbb{F}_q^{r\times n}$ and $B \in \mathbb{F}_q^{n\times s}$, their (usual) product can be seen as $AB=\sum_{i=1}^n {\rm Col}_i(A)\otimes {\rm Row}_i(B)$, where we use ${\rm Col}_i(A)$ (resp. ${\rm Row}_i(A)$) to denote the $i$-th column (resp. row) of $A$. Thus, for ${\lambda} \in (\mathbb{F}_q^\ast)^n$, we have
\begin{equation}\label{22.10.27}
AD_{\lambda}B=\sum_{i=1}^n\lambda_i{\rm Col}_i(A)\otimes {\rm Row}_i(B)=AB+\sum_{i=1}^n(\lambda_i-1){\rm Col}_i(A)\otimes {\rm Row}_i(B).
\end{equation}
If $P=P_{(i j)}$ is the permutation matrix that interchanges rows $i$ and $j$, then
\begin{equation}\label{22.10.28}
APB=AB+\left({\rm Col}_j(A)-{\rm Col}_i(A)\right)\otimes \left({\rm Row}_i(B)-{\rm Row}_j(B) \right).
\end{equation}

Now, we will successively decrease the dimension of a relative hull, say $\hull_{C_2}(C_1)$, by one via equivalent codes. 

\begin{theorem}\label{22.10.29}
Let $C_i$ be an $[n,k_i]_q$-code for $i=1,2$ with $q > 2$. For any $\ell$ with $\max\{0,k_1-k_2\} \leq \ell \leq \dim\hull_{C_2}(C_1)$, there exists a code $C_{2,\ell}$ equivalent to $C_2$ such that $$\dim\hull_{C_{2,\ell}}(C_1)=\ell.$$ Therefore, the dimension of the relative hull of $C_1$ with respect to $C_2$  can be repeatedly decreased by one until it is equal to $\max\{0,k_1-k_2\}$ by replacing $C_2$ with an equivalent code.
\end{theorem}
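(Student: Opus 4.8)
The plan is to establish a one-step version of the statement and then iterate. By Proposition~\ref{22.11.01} it is enough to alter a single code, which I take to be $C_2$, and by Proposition~\ref{Prop:basic}~(ii) the relevant quantity is purely the rank: for any monomial matrix $M$, the code $C_2M$ satisfies $\dim\hull_{C_2M}(C_1)=k_1-\rk(G_2MG_1^T)$. Since $\dim\hull_{C_2}(C_1)>\max\{0,k_1-k_2\}$ is equivalent to $r:=\rk(G_2G_1^T)<\min\{k_1,k_2\}$, the whole theorem reduces to the single-step claim: \emph{if $r<\min\{k_1,k_2\}$, then there is a monomial matrix $M$ with $\rk(G_2MG_1^T)=r+1$}. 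Granting this, I start from $C_2$ and apply the claim repeatedly, each step composing the new monomial matrix with the previous ones (their product is again monomial); the rank rises by one at each stage and is capped at $\min\{k_1,k_2\}$ by Lemma~\ref{22.11.02}, so after $\dim\hull_{C_2}(C_1)-\ell$ steps I obtain $C_{2,\ell}$ with $\dim\hull_{C_{2,\ell}}(C_1)=\ell$ for every $\ell$ in the prescribed range.

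To prove the single-step claim I would combine the rank-one perturbation identities \eqref{22.10.27} and \eqref{22.10.28} with the elementary fact that, for $H\in\F_q^{k_2\times k_1}$, $x\in\F_q^{k_2}$ and $y\in\F_q^{k_1}$, one has $\rk(H+xy^T)=\rk(H)+1$ exactly when $x$ lies outside the column space of $H$ and $y$ lies outside the row space of $H$ (writing $H$ in rank-factored form makes this transparent). Setting $u_i:={\rm Col}_i(G_2)$ and $v_i:={\rm Col}_i(G_1)$, so that $H:=G_2G_1^T=\sum_{i=1}^n u_iv_i^T$, a single scaling $I_{{\lambda}}$ with only $\lambda_i\neq 1$ perturbs $H$ by a nonzero multiple of $u_iv_i^T$, while a transposition $P_{(ij)}$ perturbs $H$ by $(u_j-u_i)(v_i-v_j)^T$. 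So I must exhibit a coordinate, or a pair of coordinates, whose associated perturbation vectors escape both the column space and the row space of $H$.

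The key observation is that, because $G_1$ and $G_2$ have full row rank, the $u_i$ span $\F_q^{k_2}$ and the $v_i$ span $\F_q^{k_1}$; since $r<k_2$ and $r<k_1$, both the column and row spaces of $H$ are proper subspaces, so there is an index with $u_i$ outside the column space of $H$ and an index with $v_j$ outside the row space of $H$. If some single coordinate $i$ realizes both conditions at once, I scale it by any $\lambda_i\in\F_q\setminus\{0,1\}$ — and this is precisely where $q>2$ is needed, to guarantee such a scalar exists — raising the rank to $r+1$. The remaining, more delicate case is when no coordinate works for both: I then pick $i$ with $u_i$ outside the column space (which forces $v_i$ inside the row space) and $j$ with $v_j$ outside the row space (which forces $u_j$ inside the column space), whereupon $u_j-u_i$ and $v_i-v_j$ must escape the column space and the row space respectively, since a subspace is closed under subtraction; thus the transposition $P_{(ij)}$ increases the rank by one. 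I expect this case split — specifically, showing that the transposition perturbation escapes both subspaces exactly when single scalings fail — to be the main obstacle, whereas the reduction to a rank statement and the final iteration are routine.
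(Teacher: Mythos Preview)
Your proof is correct and follows the same core strategy as the paper: reduce to showing that the rank of $G_2G_1^T$ can be raised by one via a single-coordinate scaling (using $q>2$) or a transposition, then iterate. The paper carries this out by first placing $G_1$ and $G_2$ in a special block form adapted to the two relative hulls and then splitting into three cases depending on which blocks vanish; you instead work directly with the column and row spaces of $H=G_2G_1^T$, yielding a cleaner two-case dichotomy. Your rank-one update criterion---$\rk(H+xy^T)=\rk(H)+1$ if and only if $x\notin\mathrm{col}(H)$ and $y\notin\mathrm{row}(H)$---collapses the paper's Case~1 and Case~2(i) into a single scaling case, while your transposition case corresponds to the paper's Case~2(ii); the paper's explicit block computations buy a more directly constructive output (one reads off the exact $\lambda$ or permutation from the matrix entries), whereas your formulation is more conceptual and avoids the preliminary coordinate permutation needed to put $\hull_{C_1}(C_2)$ into the form $[I_{\ell_2}\ A_2]$.
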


\begin{proof}
Define $\ell_1=\dim\hull_{C_2}(C_1)$ and $\ell_2=\dim\hull_{C_1}(C_2)$. We may assume that $\hull_{C_1}(C_2)$ is given by a generator matrix $[I_{\ell_2}\ A_2]$
where $I_{\ell_2}$ is an identity matrix of size $\ell_2$, since we seek a code equivalent to $C_2$. 
Extend $[I_{\ell_2}\ A_2]$ to a generator matrix 
$$G_2=\begin{pmatrix}
I_{\ell_2}&A_2\\0&B_2\end{pmatrix}$$ of $C_2$.
Similarly, let $[A_1\  B_1]$ be a generator matrix of $\hull_{C_2}(C_1)$, where $A_1$ is of size $\ell_1\times \ell_2$, and
$$G_1=\begin{pmatrix}
A_1&B_1\\
D_1&E_1\end{pmatrix}$$
is a generator matrix of $C_1$. Observe that $[I_{\ell_2}\ A_2]G_1^T=0$ and $[A_1\ B_1] G_2^T=0$, since the first matrix in each product has rows in the dual of the code generated by the second term of each product, then
$$G_2G_1^T=\begin{pmatrix}
A_1^T+A_2B_1^T& D_1^T+A_2E_1^T\\
B_2B_1^T& B_2E_1^T
\end{pmatrix}=\begin{pmatrix}
0&0\\
0&B_2E_1^T
\end{pmatrix},$$
where $B_2E_1^T$ is a $(k_2-\ell_2)\times (k_1-\ell_1)$ matrix.
By Proposition~\ref{Prop:basic} (iii), $k_2-\ell_2=k_1-\ell_1$, so $B_2E_1^T$ is a square matrix. This, together with Proposition~\ref{Prop:basic}(ii), implies that $B_2E_1^T$ has full rank.
The goal is to increase the rank of $G_2G_1^T$, meaning to determine a code equivalent to $C_2$ with generator matrix $G_2'$ so that 
$\rk \left(G_2'G_1^T \right) > \rk \left( G_2G_1^T \right)$.

Case 1: Assume $A_1\neq 0.$ Then there is $1\leq j\leq \ell_2$ such that ${\rm Row}_j(G_1^T) \neq 0$. Set ${\lambda} = (1,\ldots,1,\lambda_j,1,\ldots,1) \in (\mathbb{F}_q^\ast)^n$ to be the vector with all entries equal to $1$ except in position $j$ where the entry is $\lambda_j \neq 1$.
By Eq.~(\ref{22.10.27}), we have
$$
\begin{array}{lcl}
G_2\ D_{\lambda} G_1^T&=&
G_2 G_1^T +
(\lambda_j-1) {\rm Col}_j(G_2) \otimes {\rm Row}_j(G_1^T) \\ \ \\
&=&\begin{pmatrix}
(\lambda_j-1)\bm{e}_j^T \otimes {\rm Row}_j(A_1^T)&(\lambda_j-1)\bm{e}_j^T\otimes {\rm Row}_j(C_1^T)\\
0&B_2E_1^T\end{pmatrix}.
\end{array}$$
Observe that $\rk \left( G_2\ D_{\lambda} G_1^T \right) =k_2-\ell_2+1$, because $\lambda_j \neq 0, 1.$

Case 2: Assume $A_1 = 0.$ In this case,
$G_1=\begin{pmatrix}
0&B_1\\
D_1&E_1\end{pmatrix}.$
Recall that $B_1 \in \F_q^{\ell_1 \times (n-\ell_2)}$ has full rank. After row operations, we may consider that there are $\ell_1$ integers $1\leq i_1 < \ldots < i_{\ell_1}\leq n-\ell_2$ such that ${\rm Col}_{i_j}(B_1)=\bm{e}^\prime_j $ and ${\rm Col}_{i_j}(E_1)=0$ for $1\leq j\leq \ell_1$. 

Subcase (i): Assume that for some $1\leq j\leq \ell_1$, ${\rm Col}_{i_j}(A_2)\neq 0$. Let $\nu=\ell_2+i_j$. For an element $\lambda_\nu \in \mathbb{F}_q^\ast$ such that $\lambda_\nu \neq 1$, define ${\lambda}=(1, \dots, 1, \lambda_\nu, 1, \dots, 1)  \in (\mathbb{F}_q^\ast)^n$ as the vector with all entries equal to $1$ except in position $\nu$ where the entry is $\lambda_\nu$. Then the matrix
$$G_2D_{\lambda}G_1^T=\begin{pmatrix}
(\lambda_\nu-1){\rm Col}_{i_j}(A_2)\otimes \bm{e}^\prime_j &0\\
(\lambda_\nu-1){\rm Col}_{i_j}(B_2)\otimes \bm{e}^\prime_j&B_2E_1^T
\end{pmatrix}$$
has rank $k_2-\ell_2+1.$

Subcase (ii): Assume that ${\rm Col}_{i_j}(A_2)=0$ for all $1\leq j\leq\ell_1$. Let $P$ be the permutation matrix that interchanges rows $1$ and $\ell_2+i_1$. By Eq.~(\ref{22.10.28}),
$$G_2PG_1^T=G_2G_1^T+\begin{pmatrix}
-\bm{e}_1^T \otimes -\bm{e}^\prime_{1}& -\bm{e}_1^T \otimes {\rm Row}_1(D_1^T)\\
{\rm Col}_{i_1}(B_2)\otimes -\bm{e}^\prime_{1}&{\rm Col}_{i_1}(B_2)\otimes {\rm Row}_{1}(D_1^T)
\end{pmatrix}.$$
Since the row space of the second term is generated by the row $(-\bm{e}^\prime_{1}, {\rm Row}_1(D_1^T))$, then the matrix $G_2PG_1^T$ has rank $k_2-\ell_2+1$.

Take $G_2'=G_2 P$. Then $C_2$ is equivalent to the code $C_2'$ with generator matrix $G_2'$. Moreover, in any case, $$\rk \left(G_2'G_1^T \right) = \rk \left( G_2G_1^T \right)+1.$$
{According to Proposition~\ref{Prop:basic}(ii), 
$$\dim\hull_{C'_2}(C_1) =k_1-\rk\left(G_2'G_1^T \right)= k_1-\left(\rk \left( G_2G_1^T \right)+1 \right)=\dim\hull_{C_2}(C_1)-1,$$}meaning
we have decreased the dimension $\dim\hull_{C_2}(C_1)$ of the relative hull  by one. We can continue this process until the rank of the matrix $G_2PG_1^T$ is $k_2$, which means $\dim\hull_{C'_2}(C_1)=\max\{0,k_1-k_2\}$.
\end{proof}

Algorithm \ref{alg:1} captures the procedure written in the proof of Theorem \ref{22.10.29}. The input and the output are given in terms of the generator matrices of the pair of codes. To simplify this algorithm, we first use Algorithm \ref{alg:0} so that the generator matrices are of the appropriate form.

\begin{algorithm}
\caption{Systematic-like form for the generator matrices}\label{alg:0}
\KwData{$G_1\in\mathbb{F}_q^{k_1\times n}$, $G_2\in\mathbb{F}_q^{k_2\times n}$ full-rank matrices.}
\KwResult{$G'_1\in\mathbb{F}_q^{k_1\times n}$, $G'_2\in\mathbb{F}_q^{k_2\times n}$}
$(k_1,k_2)\gets (\rk{G_1},\rk{G_2})$

$(\ell_1,\ell_2)\gets (k_1-\rk(G_2G_1^T),k_2-\rk(G_2G_1^T))$

For $i=1,2$, pick $M_i\in\mathbb{F}_q^{k_i\times k_i}$ be a non-singular matrix such that the first $\ell_i$ rows are in $\ker(G_{1+(i\% 2)}G_{i}^T)$.

$(G_1,G_2)\gets(M_1G_1, M_2G_2)$

Pick $M_3$ a non-singular matrix, $P$ a permutation matrix such that $(M_3)_{i,j}=0$ if $i\leq\ell_2$ and $j\geq \ell_2$, and $M_3G_2P=\begin{pmatrix}
I_{\ell_2}& A_2\\ 0& B_2\end{pmatrix}$.

Let $M_4$ be a non-singular matrix such that $(M_4)_{ij}=0$ if $i\leq \ell_1$ and $j\geq \ell_1$ and $M_4G_1$ is in row-reduced-echelon form.

$G'_1\gets M_4G_1$

$G'_2\gets M_3G_2P$
\end{algorithm}

\begin{algorithm}
\caption{Reducing the hull of two codes}\label{alg:1}
\LinesNumbered
\KwData{$G_1\in\mathbb{F}_q^{k_1\times n}$, $G_2\in\mathbb{F}_q^{k_2\times n}$ full-rank matrices.}

\KwResult{$G'_2$ a full-rank matrix with $\rk(G_1(G'_2)^T)=\rk(G_1G_2^T)+1$.}

Replace $(G_1,G_2)$ with the result of Algorithm \ref{alg:0}.

\eIf{$\left[(G_1)_{ij}\right]_{i,j=1}^{\ell_1}\neq 0$}{$j\gets \min\{h\in[\ell_1]\ :\ \exists i\in[\ell_1],\ (G_1)_{ij}\neq 0\}$

Take $\lambda_j\in\mathbb{F}_q\setminus\{0,1\}$.

$\lambda\gets \lambda_j e_j + \sum_{i \in [n] \setminus \{ j \} } e_i$

$G'_2\gets G_2D_{\lambda}$

}
{
\eIf{$\exists j\in[n]$ such that $wt(\mathrm{Col}_j(G_1))=1$ and $\mathrm{Col}_j(G_2)\neq 0$}{
Take $\lambda_j\in\mathbb{F}_q\setminus\{0,1\}$.

$\lambda\gets \lambda_j e_j + \sum_{i \in [n] \setminus \{ j \} } e_i $

$G'_2\gets G_2D_{\lambda}$

}{Take $j\in[n]$ such that $\mathrm{Col}_j(G_1)=e_1$. 

Take $P'$, the permutation matrix that permutes rows $1$ and $j$. 

$G'_2\gets G_2P'$
}}
\end{algorithm}

We now give some examples to illustrate how the proof of Theorem~\ref{22.10.29} constructs equivalent codes that reduce the relative hull, using ~\cite{macaulay2, magma, Mac2} to make the computations.

\begin{example}
Let $a$ be a primitive element of $\mathbb{F}_9$, with $a^2-a-1=0$, and $C_1$ and $C_2$ the codes over $\mathbb{F}_9$ generated respectively by
	$$G_1=\begin{pmatrix}
      1&0&0&0&0&1&a\\
      0&1&0&0&-a-1&-a-1&a\\
      0&0&1&0&a+1&a+1&a+1\\
      0&0&0&1&0&0&0
      \end{pmatrix}$$
and
     $$G_2=\begin{pmatrix}
      1&0&0&0&1&-1&0\\
      0&1&0&0&1&-a-1&a\\
      0&0&1&0&a-1&-a-1&a\\
      0&0&0&1&0&0&0
      \end{pmatrix}.$$
           The subspaces $\hull_{C_2}(C_1)$ and $\hull_{C_1}(C_2)$ are generated by the first three rows of $G_1$ and $G_2$, respectively. This example corresponds to the proof of Theorem~\ref{22.10.29}, Case 1. We only need to choose  $\lambda$ with entries different from $1$ since the first three entries of the main diagonal are non-zero.
     
      For $0\leq\ell\leq 3$, let $\lambda^{(\ell)}\in\mathbb{F}_9^7$ be the vector such that $\left(\lambda^{(\ell)}\right)_i = a$ for $1\leq i\leq 3-\ell$ and $\left(\lambda^{(\ell)}\right)_i = 1$ for $i\geq 3-\ell$. Let $C_{2,\ell}$ be the code generated by $G_2D_{\lambda^{(\ell)}}$. We have 
     $$G_2D_{\lambda^{(\ell)}}G_1^T=\begin{pmatrix}
      \lambda_{1}^{(\ell)}-1&0&0&0\\
      0&\lambda_{2}^{(\ell)}-1&0&0\\
      0&0&\lambda_{3}^{(\ell)}-1&0\\
      0&0&0&1
      \end{pmatrix}.$$
Therefore, $\rk (G_2D_{\lambda^{(\ell)}}G_1^T)=4-\ell$ and thus $\dim\hull_{C_{2,\ell}}(C_1)=\ell$.
\end{example}

\begin{example}\rm
Let $a$ be a primitive element of $\mathbb{F}_9$, with $a^2-a-1=0$, and $C_1$ and $C_2$ the codes over $\mathbb{F}_9$ generated respectively by
 $$G_1=\begin{pmatrix}
       0&0&1&-1&0&0\\
       0&0&0&0&1&-1\\
       -a&0&1&0&0&0\\
       0&-a-1&0&0&1&0
       \end{pmatrix}
\textnormal{ and }
     G_2=\begin{pmatrix}
        1&0&a&a&0&0\\
       0&1&0&0&a+1&a+1\\
       0&0&1&1&0&0\\
       0&0&0&0&1&1
       \end{pmatrix}.$$
The relative hulls are generated by the first two columns of each matrix. As $G_1$ has its principal minor of size 2 equal to zero, this example corresponds to the proof of Theorem~\ref{22.10.29}, Case 2. We can use the first two entries of the last four columns of $G_2$ to modify the hull size (Subcase (i)) because they are non-zero. Let $\lambda^{(1)}\in\mathbb{F}_9^6$ such that $\lambda^{(1)}_i = 1$ for $i\neq 6$ and $\lambda^{(1)}_6 = a$. Let $C_{2,1}$ be the code generated by $G_2D_{\lambda^{(1)}}$. The matrix
  $$G_2D_{\lambda^{(1)}}G_1^T=\begin{pmatrix}  0&0&0&0\\
       0&-a&0&0\\
       0&0&1&0\\
       0&-a+1&0&1\end{pmatrix}$$
has rank $3$ and $\dim\hull_{C_{2,1}}(C_1)=1$. We can check that the last three rows of $G_2$ do not belong to $\hull_{C_{1}}(C_{2,1})$, so we are still in Case 2, Subcase (i) of the proof of Theorem~\ref{22.10.29}. Let $\lambda^{(2)}\in\mathbb{F}_9^6$ such that $\lambda^{(2)}_i = 1$ for $i\neq 4$ and $\lambda^{(2)}_4 = a$. Let $C_{2,2}$ be the code generated by $G_2D_{\lambda^{(1)}}D_{\lambda^{(2)}}$. The matrix
       $$G_2D_{\lambda^{(1)}}D_{\lambda^{(2)}}G_1^T=\begin{pmatrix}
       -1&0&0&0\\
       0&-a&0&0\\
       -a+1&0&1&0\\
       0&-a+1&0&1
       \end{pmatrix}$$
has rank $4$ and $\dim\hull_{C_{2,2}}(C_1)=0$.
\end{example}

\begin{example}
Let $a$ be a primitive element of $\mathbb{F}_9$, with $a^2-a-1=0$, and $C_1$ and $C_2$ the codes over $\mathbb{F}_9$ generated respectively by
	$$G_1=\begin{pmatrix}
			0&0&-a&-a&1&0\\
       0&0&-a&-a&0&1\\
       0&0&1&0&0&0\\
       0&0&0&1&0&0
      \end{pmatrix}
\textnormal{ and }
      G_2=\begin{pmatrix}
      		1&0&0&0&0&0\\
       0&1&0&0&0&0\\
       0&0&1&0&a&a\\
       0&0&0&1&a&a
       \end{pmatrix}.$$
The relative hulls are generated by the first two rows of each matrix. The principal minor of size 2 of $G_1$ is 0, so this example corresponds to the proof of Theorem~\ref{22.10.29}, Case 1. Since the $(G_2)_{i,j}=0$ for $i=1,2$ and $3\leq j\leq 6$, we are in the Subcase (ii). We need to perform some column permutations to $G_2$ to get an equivalent code with a smaller relative hull than $C_2$.

Let $P_1$ be the permutation matrix that permutes columns $5$ and $1$, and let $C_{2,1}$ be the code generated by $G_2P_1$. The matrix
$$G_2P_1G_1^T=\begin{pmatrix}
       1&0&0&0\\
       0&0&0&0\\
       -a&0&1&0\\
       -a&0&0&1\end{pmatrix}$$
has rank 3, therefore $\dim \hull_{C_{2,1}}(C_1)=1$.

Let $P_2$ be the permutation matrix that permutes columns $2$ and $6$, and let $C_{2,0}$ be the code generated by $G_2P_1P_2$. The matrix
       $$G_2P_1P_2G_1^T=\begin{pmatrix}1&0&0&0\\
       0&1&0&0\\
       -a&-a&1&0\\
       -a&-a&0&1\end{pmatrix}$$
has rank 4 and thus, $\dim\hull_{C_{2,0}}(C_1)=0$. 
\end{example}
Let $C_1$ and $ C_2$ be two codes over $\F_q$ with $q = p^m > 2$, and let $e$ be an integer such that $0 \leq e < m$. Applying Theorem \ref{22.10.29} to the relative hull of $C_1$ with respect to $C_2^{p^e}$, we obtain a similar result for the $e$-Galois hull of $C_1$ with respect to $C_2$. This consequence is captured in the next statement.

\begin{corollary}
Let $C_i$ be an $[n,k_i]_q$-code for $i=1,2$ with $q=p^m>2$. Take $e$ such that $0 \leq e < m$. For any $\ell$ with $\max\{0,k_1-k_2\} \leq \ell \leq \dim\hull^e_{C_2}(C_1)$, there is a code $C_{2,\ell}$ equivalent to $C_2$ such that $$\dim\hull^e_{C_{2,\ell}}(C_1)=\ell.$$ Therefore, the dimension of the $e$-Galois relative hull of $C_1$ with respect to $C_2$  can be repeatedly decreased by one until it is equal to $\max\{0,k_1-k_2\}$ by replacing $C_2$ with an equivalent code.
\end{corollary}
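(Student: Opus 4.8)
The plan is to reduce the statement to the Euclidean case already settled in Theorem~\ref{22.10.29}, using the identity \eqref{e_Gal_red}, namely $\hull^e_{C_2}(C_1)=\hull_{C_2^{p^e}}(C_1)$. Since the map $x\mapsto x^{p^e}$ is a bijection of $\F_q$, the code $C_2^{p^e}$ is again an $[n,k_2]_q$-code, so $\dim\hull^e_{C_2}(C_1)=\dim\hull_{C_2^{p^e}}(C_1)$ and the admissible range $\max\{0,k_1-k_2\}\le\ell\le\dim\hull^e_{C_2}(C_1)$ coincides with the range appearing in Theorem~\ref{22.10.29} applied to the pair $(C_1,C_2^{p^e})$. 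First I would invoke Theorem~\ref{22.10.29} with $C_2$ replaced by $C_2^{p^e}$ to produce, for each such $\ell$, a code $D_\ell$ equivalent to $C_2^{p^e}$ with $\dim\hull_{D_\ell}(C_1)=\ell$.

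The crux is that $D_\ell$ is equivalent to $C_2^{p^e}$, whereas the corollary demands a code equivalent to $C_2$ itself, so the key step is to show that taking $p^e$-th powers interchanges with monomial equivalence. Concretely, for any monomial matrix $M=I_{\mu}P_\sigma$ one checks $(CM)^{p^e}=C^{p^e}M^{p^e}$, where $M^{p^e}:=I_{\mu^{p^e}}P_\sigma$ is obtained by raising the diagonal entries to the $p^e$ power; the permutation part $P_\sigma$ is unchanged because $0$ and $1$ are fixed by Frobenius. In particular $M^{p^e}$ is again monomial, so $C\mapsto C^{p^e}$ carries codes equivalent to $C_2$ to codes equivalent to $C_2^{p^e}$. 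To invert this, I would write $D_\ell=C_2^{p^e}M$ with $M=I_{\mu}P_\sigma$ monomial and set $C_{2,\ell}:=C_2\,M^{p^{l-e}}$, where $M^{p^{l-e}}=I_{\mu^{p^{l-e}}}P_\sigma$. Since the Frobenius automorphism has order $l$ on $\F_q$, we get $(M^{p^{l-e}})^{p^e}=I_{\mu^{p^l}}P_\sigma=I_{\mu}P_\sigma=M$, hence $C_{2,\ell}^{p^e}=C_2^{p^e}M=D_\ell$. Thus $C_{2,\ell}$ is equivalent to $C_2$ and satisfies $C_{2,\ell}^{p^e}=D_\ell$, so by \eqref{e_Gal_red} once more, $\dim\hull^e_{C_{2,\ell}}(C_1)=\dim\hull_{C_{2,\ell}^{p^e}}(C_1)=\dim\hull_{D_\ell}(C_1)=\ell$. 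The final ``decrease by one'' assertion then follows by letting $\ell$ run from $\dim\hull^e_{C_2}(C_1)$ down to $\max\{0,k_1-k_2\}$.

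The only real obstacle I anticipate is the bookkeeping around the Frobenius power: verifying that raising a monomial matrix to a Frobenius power again yields a monomial matrix, and that $p^{l-e}$ is the correct inverse exponent via $\mu^{p^l}=\mu$. Everything else is a direct transport of Theorem~\ref{22.10.29} through the bijection $C\mapsto C^{p^e}$, requiring no new rank or combinatorial argument.
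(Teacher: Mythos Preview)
Your proposal is correct and follows essentially the same approach as the paper, which simply states that the corollary follows immediately from Theorem~\ref{22.10.29} and Eq.~\eqref{e_Gal_red}. You actually supply a detail the paper leaves implicit: Theorem~\ref{22.10.29} applied to $(C_1,C_2^{p^e})$ produces a code equivalent to $C_2^{p^e}$, and you correctly observe that one must pull this back through the Frobenius to obtain a code equivalent to $C_2$; your inverse-exponent argument $C_{2,\ell}:=C_2\,M^{p^{l-e}}$ with $(M^{p^{l-e}})^{p^e}=M$ is exactly what is needed.
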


\begin{proof} This statement follows immediately from Theorem \ref{22.10.29} and Eq.~(\ref{e_Gal_red}).
\end{proof}

Let $C_i$ be an $[n,k_i]_q$-code for $i=1,2$. If $c_1=\left( c_{11}, c_{12} , \dots, c_{1n}  \right) \in C_1$ and $c_2=\left(  c_{21},  c_{22}, \dots,  c_{2n} \right) \in C_2$, then their {\it Schur product} is defined by
$$c_1\star c_2= \left( c_{11} c_{21}, c_{12} c_{22}, \dots, c_{1n} c_{2n} \right) \in \F_q^n.$$
The Schur product of the codes $C_1$ and $C_2$, denoted by $C_1\star C_2$, is defined as the $\F_q$-vector space spanned by the set
$\left\{c_1\star c_2 \ | \ c_1 \in C_1, c_2\in C_2 \right\}.$ For an element $\lambda$ in $\mathbb{F}_q^n$, $\lambda \star C_2$ denotes the $\F_q$-vector space spanned by the set $\left\{\lambda \star c_2 \ | \ c_2\in C_2 \right\}$.

\begin{proposition}\label{egal}
Let $C \subseteq \mathbb{F}_q^n$ be a code with $q = p^m > 2$.
Take $e$ such that $0 \leq e < m$ and define $\ell=\dim\hull_e(C)$. If there exists $x \in \F_q^\ast$ such that $x^{p^e+1}\neq 1$, then  $\dim\hull_e(\lambda\star C)=\ell-1$ for some $\lambda \in \left(\F_q^\ast\right)^n$.
\end{proposition}

\begin{proof}
	Let $G$ be a generator matrix of $C$. As $C^{\perp_e}=(C^{p^e})^\perp$,
	$$\dim\hull_e(\lambda\star C)=\dim C-\rk(GD_{\lambda^{p^e+1}}(G^{p^e})^T),$$
	where $(G^{p^e})_{ij}=(G_{ij})^{p^e}$. The proof of Theorem~\ref{22.10.29} guarantees that we can reduce the rank of this matrix as long as there exists $x\in\mathbb{F}_q$ with $x^{p^e+1}\neq 1$.
\end{proof}

As a corollary, we can prove some of the significant results that were initially proved by Carlet, Mesnager, Tang, Qi, and Pellikaan (existence of LCD codes for the case of the Euclidean and the Hermitian inner product ~\cite{lcd}) and Luo, Ezerman, Grassl, and Ling (the step-wise reduction of the dimension of the Hermitian hull~\cite{Grassl22}).

\begin{corollary}\label{23.06.24}
Let $C \subseteq \F_q^n$ be a linear code. The following hold:
	\begin{enumerate}
		\item If $q>3$ and $0\leq\ell\leq\dim\hull(C)$, then there is a code $C_\ell$ equivalent to $C$ such that $\hull(C_\ell)=\ell$.
		
		\item If $q> 4$ is a square and $0\leq\ell\leq\dim\hull_h(C)$, then there is a code $C_\ell$ equivalent to $C$ such that $\hull_h(C_\ell)=\ell$. 
	\end{enumerate}
\end{corollary}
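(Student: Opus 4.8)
The plan is to obtain both parts as iterated applications of Proposition~\ref{egal}. The point is that the ordinary Euclidean hull is the $e$-Galois hull with $e=0$, while the Hermitian hull is the $e$-Galois hull with $e=m/2$ (writing $q=p^m$ with $m$ even), for which $p^e=\sqrt q$. Thus $\hull(C)=\hull^0(C)$ and $\hull^H(C)=\hull^{m/2}(C)$, and in both cases the hypothesis of Proposition~\ref{egal} collapses to a single field-theoretic statement, namely that there is a nonzero $x\in\F_q$ with $x^{p^e+1}\neq 1$. Once that is checked, the corollary should follow by applying Proposition~\ref{egal} repeatedly.

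For the iteration, set $h:=\dim\hull^e(C)$ and fix a target $0\le\ell\le h$. Each application of Proposition~\ref{egal} takes a code whose $e$-Galois hull has positive dimension to an equivalent code whose $e$-Galois hull has dimension one smaller, via a scaling vector $\lambda\in(\F_q^\ast)^n$; the entries must be nonzero for $\lambda\star C$ to be monomially equivalent to $C$, which is how $\lambda$ is produced inside the proof of Theorem~\ref{22.10.29}. Applying this $h-\ell$ times yields a chain $C=C^{(0)},C^{(1)},\dots,C^{(h-\ell)}$ with $\dim\hull^e(C^{(i)})=h-i$ and $C^{(i)}=\mu^{(i)}\star C$, where $\mu^{(i)}$ is the componentwise product of the successive scalings. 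Since a componentwise product of vectors in $(\F_q^\ast)^n$ stays in $(\F_q^\ast)^n$, every $C^{(i)}$ is equivalent to $C$, and $C_\ell:=C^{(h-\ell)}$ is the required code. The field condition does not change along the chain because it depends only on $q$ and $e$, so every step is legitimate; the case $\ell=h$ is handled by taking $C_\ell=C$.

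It remains to verify the field condition, and this is where I expect the real content to sit, since it also decides the admissible range of $q$. For part (1) we have $p^e+1=2$, so we need a nonzero $x$ with $x^2\neq 1$; as $x^2=1$ has at most the two roots $\pm 1$ in $\F_q$, such an $x$ exists whenever $|\F_q^\ast|=q-1>2$, that is $q>3$. For part (2), writing $r:=\sqrt q=p^{m/2}$, we need a nonzero $x$ with $x^{r+1}\neq 1$; the solutions of $x^{r+1}=1$ form the kernel of the norm map $\F_{r^2}^\ast\to\F_r^\ast$, a subgroup of order $\gcd(r+1,r^2-1)=r+1$, so such an $x$ exists exactly when $r^2-1>r+1$, i.e. $r\ge 3$. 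The main obstacle is thus purely this count: it shows the Hermitian reduction genuinely requires $\sqrt q\ge 3$ (equivalently $q\ge 9$), since for $q=4$ every nonzero scalar satisfies $x^{r+1}=x^3=1$ and no monomial equivalence can change the Hermitian hull. With the condition established in each case, the corollary follows from Proposition~\ref{egal} together with the identifications $\hull=\hull^0$ and $\hull^H=\hull^{m/2}$, recovering the results of \cite{lcd} and \cite{Grassl22}.
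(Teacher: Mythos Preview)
Your approach is the same as the paper's: identify $\hull=\hull^0$ and $\hull^H=\hull^{m/2}$, then iterate Proposition~\ref{egal} once the field condition $x^{p^e+1}\neq 1$ has been checked. The iteration you set up is exactly what is implicit in the paper's one-line appeal to Proposition~\ref{egal}.

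Your verification of the field condition is in fact sharper than the paper's, and you have correctly caught a defect in part~(2). Over $\F_4$ every nonzero scalar satisfies $x^{\sqrt q+1}=x^3=1$, so the diagonal factor $I_{\lambda^{p^e+1}}$ in Proposition~\ref{egal} is the identity; since a permutation applied to $C$ induces the same permutation on $C^{p^e}$ and hence cancels in $G M (G^{p^e}M^{p^e})^T$, no monomial equivalence can alter the Hermitian hull over $\F_4$ at all. The code $\langle(1,1)\rangle\subseteq\F_4^2$ is Hermitian self-dual and so is every code monomially equivalent to it, giving an explicit counterexample. The paper's own proof contains a slip at this point (it writes $x^{q+1}$ where $x^{\sqrt q+1}$ is meant and misidentifies the solution set of $t^{\sqrt q+1}=1$ with $\F_{\sqrt q}$, when in fact it is the order-$(\sqrt q+1)$ subgroup of $\F_q^\ast$); once these are corrected, the paper's argument yields exactly your bound $\sqrt q\ge 3$, i.e.\ $q\ge 9$, and the printed hypothesis $q\ge 4$ is too weak.
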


\begin{proof}
The Euclidean hull is the $e$-Galois hull with $e=0$. Thus, it is enough to guarantee that $x^2-1\neq 0$ for some $x \in \F_q^\ast$, which happens if $q>3$.

The Hermitian hull is also an $e$-Galois hull where $e$ satisfies $p^e=\sqrt{q}$ and $p$ is the characteristic of the field. By Proposition \ref{egal}, we can reduce the hull using an equivalent code as long as there is $x \in \F_q^\ast$ such that $x^{\sqrt{q}+1}\neq 1$. Note that as $q>4$, $\sqrt{q}+1 < q-1$. Thus, not all the elements of $\F_q^\ast$ can be roots of the polynomial $f(t)=t^{\sqrt{q}+1}-1$, meaning that there is $x \in \F_q^\ast$ such that $x^{\sqrt{q}+1} \neq 1$. 
Another way to see this is by noticing that $x^{\sqrt{q}+1}$ is the norm of $x$ with respect to the extension $\F_q/\F_{\sqrt{q}}$. As the norm is surjective, there are non-zero elements with a norm different from 1 when $q>4$.

\end{proof}

\begin{remark}
If we only consider monomial matrices of the form $M=D_{\lambda}$ in the definition of equivalent codes, meaning no permutations of coordinates are allowed, then it may be impossible to reduce $\dim \hull_{C_2}(C_1)$ to $\max\{0,k_1-k_2\}$. The following example illustrates this fact.
\end{remark}

\begin{example}
Let $C_1$ and $C_2$ be the codes over $\F_q$ generated respectively by
$$G_1= \left( 1\ 1\ 0\ 0 \right) \qquad \text{ and } \qquad G_2= \left( 0\ 0\ 1\ 1 \right).$$
Note that $\max\{0,k_1-k_2\}=0$ and that $G_1D_{\lambda}G_2^T=0$ for any ${\lambda}\in(\mathbb{F}_q^\ast)^n$. Hence, $\dim \hull_{C_2D_{\lambda}}(C_1)=1$ for any $\lambda\in(\mathbb{F}_q^\ast) ^n$.

To get the minimum possible hull, we need permutations. If $P$ is the permutation matrix that interchanges the first and the fourth column, then  $G_1P^TG_2^T=I_1$ and thus $\mathrm{Hull}_{C_2P}(C_1)=0$. 
\end{example}

\section{Increasing the relative hull} \label{increaserelativehull}
Let $C_i$ be an $[n,k_i]_q$-code for $i=1,2$. In this section, we give conditions that allow us to find equivalent codes that successively increase the dimension of the relative hull of  $C_1$ with respect to $C_2$  by one. As in Section~\ref{reducerelativehull}, according to Proposition~\ref{22.11.01}, we only need to show that an equivalent code exists for one of the linear codes. Hence, we aim to determine when it is possible to find {a code $C_1^\prime$ equivalent to $C_1$} such that $\dim \hull_{C_2}(C^\prime_1)=\dim \hull_{C_2}(C_1)+1$.

The following lemma gives an upper bound on the increased dimension of the relative hull. However, as we will see, it is only possible sometimes to increase the dimension of the relative hull using equivalent codes.
\begin{lemma}\label{22.11.03}
Let $C_i$ be an $[n,k_i]_q$-code for $i=1,2$. If $C_1^\prime$ is equivalent to $C_1$, then 
\[
\dim \hull_{C_2}(C_1^\prime) \leq  \min \{ k_1,  n-k_2 \}.
\]
\end{lemma}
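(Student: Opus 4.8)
The plan is to bound $\dim \hull_{C_2}(C_1^\prime)$ using Proposition~\ref{Prop:basic} together with two elementary containments. Since $C_1^\prime$ is equivalent to $C_1$, it has the same dimension $k_1$, and by Remark~\ref{22.11.04} we may write $C_1^\prime = C_1 I_{{\lambda}} P_\sigma$ so that $C_1^\prime$ has generator matrix $G_1^\prime := G_1 I_{{\lambda}} P_\sigma$ of rank $k_1$.

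First I would establish the bound $\dim \hull_{C_2}(C_1^\prime) \leq k_1$. This is immediate: by definition $\hull_{C_2}(C_1^\prime) = C_1^\prime \cap C_2^\perp \subseteq C_1^\prime$, and $\dim C_1^\prime = k_1$ since equivalence preserves dimension. Alternatively, one reads it straight off Proposition~\ref{Prop:basic}(ii), which gives $\dim \hull_{C_2}(C_1^\prime) = k_1 - \rk(G_2 (G_1^\prime)^T) \leq k_1$ because the rank term is nonnegative.

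Next I would establish the bound $\dim \hull_{C_2}(C_1^\prime) \leq n - k_2$. The cleanest route is again the containment $\hull_{C_2}(C_1^\prime) = C_1^\prime \cap C_2^\perp \subseteq C_2^\perp$, combined with the standard fact that $\dim C_2^\perp = n - k_2$. Since a subspace cannot exceed the ambient space in dimension, this yields the desired inequality. (One could instead argue via Proposition~\ref{Prop:basic}(iii) and the symmetric bound $\dim \hull_{C_1^\prime}(C_2) \leq k_2$, but the direct containment is shorter.)

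Finally, combining the two bounds gives $\dim \hull_{C_2}(C_1^\prime) \leq \min\{k_1, n-k_2\}$, as claimed. There is no real obstacle here: the entire statement follows from the two trivial subspace containments $C_1^\prime \cap C_2^\perp \subseteq C_1^\prime$ and $C_1^\prime \cap C_2^\perp \subseteq C_2^\perp$ together with the dimension formula for duals. The only point worth stating explicitly is that monomial equivalence preserves the dimension of a code (noted already after the definition of equivalence), which is what lets us replace $\dim C_1^\prime$ by $k_1$.
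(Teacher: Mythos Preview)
Your proof is correct and is exactly the argument the paper has in mind: the paper's own proof simply reads ``This is clear by the definition of $\hull_{C_2}(C_1')$,'' and you have spelled out the two trivial containments $C_1'\cap C_2^\perp\subseteq C_1'$ and $C_1'\cap C_2^\perp\subseteq C_2^\perp$ that make it so.
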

\begin{proof} This is clear by the definition of $\hull_{C_2}(C_1')$.
\end{proof}


By Theorem~\ref{22.10.29}, we can decrease the relative hull dimension by increasing the rank of the matrix $G_1G_2^T$. To increase the relative hull dimension instead, we could try to mimic this idea by decreasing the rank of the matrix $G_1G_2^T$ until it is equal to $0$. Unfortunately, the following example shows that reducing the rank of this matrix $G_1G_2^T$ is not always possible.

\begin{example}
Let $C_1$ and $C_2$ be the codes over $\F_q$ generated respectively by
$$G_1=\begin{pmatrix} 1&0&-1&0\\ 0&1&0&-1\end{pmatrix} \qquad \text{ and } \qquad G_2=\begin{pmatrix}1&0&1&1\\ 0&1&0&0\end{pmatrix}.$$
For any permutation matrix $P$ and for any $\lambda\in(\mathbb{F}_q^\ast)^4$, the second column of $G_1D_\lambda PG_2^T$ is either $\pm(\lambda_2\ 0)^T$ or $\pm(0\ \lambda_2)^T$. Thus, the rank of $G_1D_\lambda PG_2^T$ is at least 1. 
\end{example}

We can relate the maximum dimension of the hull under isometries of the form $D_\lambda$ with the dual of the Schur product of the codes.

\begin{proposition}\label{Prop:maxwt}
If $C_i$ is an $[n,k_i]_q$-code for $i=1,2$, then
\[ \max\{\dim \hull_{C_2}(C_1D_{\lambda}) \ |\ {\lambda} \in (\mathbb{F}_q^\ast)^n\} \geq \max\wt\left((C_1\star C_2)^\perp\right)-n+\min\{k_1,k_2\}.\]
\end{proposition}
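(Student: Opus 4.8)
The plan is to connect the dimension of the relative hull $\dim \hull_{C_2}(C_1 I_{\lambda})$ with the existence of a high-weight vector in the dual of the Schur product, and then invoke Proposition~\ref{Prop:basic}(ii) to convert the rank of $G_1 I_\lambda G_2^T$ into a hull dimension. First I would observe that, just as in Proposition~\ref{22.11.01}, the key quantity is $\rk(G_2 I_\lambda G_1^T)$, since by Proposition~\ref{Prop:basic}(ii) we have $\dim \hull_{C_2}(C_1 I_\lambda) = k_1 - \rk(G_2 I_\lambda G_1^T)$. So maximizing the relative hull dimension over all $\lambda$ is the same as \emph{minimizing} $\rk(G_2 I_\lambda G_1^T)$ over all $\lambda \in (\mathbb{F}_q^\ast)^n$. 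The heart of the matter is therefore to produce a single $\lambda$ that kills as many rows of this product as possible.

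The main idea I would use is that a vector $\lambda$ lying in $(C_1 \star C_2)^\perp$ forces $G_1 I_\lambda G_2^T = 0$. Indeed, the $(s,t)$ entry of $G_1 I_\lambda G_2^T$ is $\sum_{i=1}^n \lambda_i (G_1)_{si} (G_2)_{ti}$, which is exactly the Euclidean inner product of $\lambda$ with the Schur product $\mathrm{Row}_s(G_1) \star \mathrm{Row}_t(G_2)$, a generator of $C_1 \star C_2$. Thus if $\lambda \in (C_1 \star C_2)^\perp$, then every such entry vanishes and $G_1 I_\lambda G_2^T = 0$, giving relative hull dimension $k_1$. The difficulty, however, is that such a $\lambda$ need not have full support: if $\lambda$ has some zero coordinates then $I_\lambda$ is not invertible and $C_1 I_\lambda$ is not equivalent to $C_1$. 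The natural fix is to take a codeword ${w} \in (C_1 \star C_2)^\perp$ of maximal weight $\max\wt\left((C_1\star C_2)^\perp\right)$, and then repair the zero coordinates.

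The key step is the repair argument. Suppose ${w} \in (C_1 \star C_2)^\perp$ has weight $r := \max\wt\left((C_1\star C_2)^\perp\right)$, so it has $n - r$ zero coordinates. Define $\lambda \in (\mathbb{F}_q^\ast)^n$ by setting $\lambda_i := w_i$ wherever $w_i \neq 0$ and $\lambda_i := 1$ (any nonzero value) in the $n-r$ zero positions. Then $\lambda$ has full support, so $C_1 I_\lambda$ is genuinely equivalent to $C_1$. Now write $G_1 I_\lambda G_2^T = G_1 I_{{w}} G_2^T + G_1 I_{\lambda - {w}} G_2^T = G_1 I_{\lambda-{w}} G_2^T$, where the first term vanishes because ${w} \in (C_1 \star C_2)^\perp$, and where $\lambda - {w}$ is supported on exactly the $n-r$ positions where ${w}$ is zero. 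By the identity $G_1 I_\mu G_2^T = \sum_{i=1}^n \mu_i \mathrm{Col}_i(G_1) \otimes \mathrm{Row}_i(G_2^T)$ (analogous to Eq.~\ref{22.10.27}), the matrix $G_1 I_{\lambda-{w}} G_2^T$ is a sum of only $n-r$ rank-one terms, hence has rank at most $n-r$. Therefore $\rk(G_1 I_\lambda G_2^T) \leq n - r$, and by Proposition~\ref{Prop:basic}(ii),
\[
\dim \hull_{C_2}(C_1 I_\lambda) = k_1 - \rk(G_2 I_\lambda G_1^T) = k_1 - \rk(G_1 I_\lambda G_2^T) \geq k_1 - (n - r),
\]
using $\rk(G_2 I_\lambda G_1^T) = \rk(G_1 I_\lambda G_2^T)$. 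This is precisely the claimed bound $\max\wt\left((C_1\star C_2)^\perp\right) - n + k_1$.

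The step I expect to be the main obstacle is verifying that the rank of $G_1 I_{\lambda - {w}} G_2^T$ is genuinely bounded by the number of nonzero coordinates of $\lambda - {w}$, rather than by $k_1$ or $k_2$ — this is where the decomposition into $n-r$ rank-one outer products does the essential work, and one must be careful that only the columns indexed by the support of $\lambda - {w}$ contribute. The hypothesis $k_1 \leq k_2$ is not used in the rank bound itself but ensures the stated lower bound is the natural one to compare against; I would simply carry it along and remark that the same argument symmetrically bounds the dimension via $G_2 I_\lambda G_1^T$. One subtlety worth checking is the edge case where $\max\wt\left((C_1\star C_2)^\perp\right) - n + k_1$ is nonpositive, in which case the bound is vacuous and the statement holds trivially since relative hull dimension is always nonnegative.
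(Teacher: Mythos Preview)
Your proof is correct and follows essentially the same route as the paper: take a maximal-weight vector $w\in(C_1\star C_2)^\perp$, replace its zero coordinates by $1$'s to obtain $\lambda\in(\F_q^\ast)^n$, observe that $G_1I_wG_2^T=0$ so $G_1I_\lambda G_2^T=G_1I_{\lambda-w}G_2^T$, and bound the rank of the latter by the size $n-r$ of the support of $\lambda-w$. The paper phrases the last step as factoring through the rank-$\ell$ matrix $\left(\begin{smallmatrix}I_\ell&0\\0&0\end{smallmatrix}\right)$ (with $\ell=n-r$) rather than as a sum of $n-r$ rank-one outer products, but these are the same observation; your remark that the hypothesis $k_1\le k_2$ plays no essential role in the rank bound is also accurate.
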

\begin{proof}
Let $G_1$ and $G_2$ be generator matrices of $C_1$ and $C_2$, respectively.
According to Proposition \ref{Prop:basic}~(ii), we need to show that
$$\min\left\{\rk(G_1D_{\lambda} G_2^T) \ | \ {\lambda}\in(\mathbb{F}_q^\ast)^n \right\}\leq n-\max\wt\left\{(C_1\star C_2)^\perp\right\}.$$
Suppose $\max\mathrm{wt}\left((C_1\star C_2)^\perp\right)=n-\ell$, and take ${\gamma} \in (C_1\star C_2)^\perp$ with $\wt({\gamma} ) =n-\ell$. If $\ell\geq \min\{k_1,k_2\}$, the result follows as $\rk(G_1D_{\lambda} G_2^T)\leq\min\{k_1,k_2\}$ for any $\lambda\in(\mathbb{F}_q^\ast)^n$. Assume that $\ell<\min\{k_1,k_2\}$. Without loss of generality, we can assume that the first $\ell$ entries of ${\gamma}$ are equal to zero. Define ${\lambda} = (1, \dots, 1, \gamma_{\ell +1}, \dots, \gamma_n)$. Then
    $$G_1D_{\lambda} G_2^T=\left(\sum_{h=1}^\ell a_{ih}b_{jh}\right)_{i,j=1}^{k_1,k_2}=G_1\begin{pmatrix} I_\ell&0\\ 0&0\end{pmatrix}G_2^T.$$
Since $\ell<\min\{k_1,k_2\}$, the rank of this product is at most $\ell$, and we have the conclusion.
\end{proof}

In the case where $C_1=C_2$, the code $(C_1\star C_2)^\perp$ was used in \cite{rains} to find self-orthogonal truncations of $C_1$.

It is evident that the bound given by Proposition~\ref{Prop:maxwt} is sharp for codes $C_1$ and $C_2$ such that there is an equivalent code $C'$ to $C_1$ with $C_2^\perp\subseteq C'$. The following example shows that the bound may be sharp even when such an equivalent code does not exist.

\begin{example}\rm
Take $G_1 = G_2 = \begin{pmatrix} 1&0&0\\ 0&1&\beta\end{pmatrix} \in \mathbb{F}_q^{2 \times 3}$ with $\beta\neq 0$.
For any ${\lambda}=(\lambda_1, \ \lambda_2, \ \lambda_3)\in (\F_q^\ast)^3$, we have
$$G_1D_{\lambda}G_2^T=
\begin{pmatrix}
\lambda_1&0\\ 0&\lambda_2+\beta^2\lambda_3
\end{pmatrix}.$$
Then, $\rk \left( G_1D_{\lambda}G_2^T \right) =1$ when $\lambda_2=-\beta^2\lambda_3$; otherwise, $\rk \left( G_1D_{\lambda}G_2^T \right) =2$. Since $1$ is the smallest rank achievable for any ${\lambda}$, the maximum rank of the relative hull is $2-1=1$. 

On the other hand, if $C$ is the code generated by $G_1$, then a generator matrix for the code $C\star C$ is 
$\begin{pmatrix} 1&0&0\\ 0&1&\beta^2\end{pmatrix}$.
It is clear that $(C\star C)^\perp=\langle (0,-\beta^2,1)\rangle$. Then
$$\max\mathrm{wt}((C\star C)^\perp)-n+k_1=2-3+2=1,$$ demonstrating that equality is achievable in Proposition \ref{Prop:maxwt}.
\end{example}

The bound of Proposition~\ref{Prop:maxwt} is an upper bound for the dimension of the relative hull.

\begin{proposition}\label{23.06.13}
If $C_i$ is an $[n,k_i]_q$-code for $i=1,2$, and $k_1\leq k_2$, then
	$$\dim\hull_{C_2}(C_1) \leq \max\mathrm{wt}((C_1\star C_2)^\perp) - n + k_1.$$
\end{proposition}

\begin{proof}
Let $G_1$ and $G_2$ be generator matrices of $C_1$ and $C_2$, respectively, such that
$$G_1G_2^T=\begin{pmatrix} 0& 0\\ 0&I_\ell\end{pmatrix},$$
	
	\noindent where $\ell$ is defined as $k_1-\dim\hull_{C_2}(C_1)$. Since a basis for $C_1\star C_2$ is given by the set
	$\{{\rm Row}_i(G_1)~\star~{\rm Row}_j(G_2) \ : \ i=1,\ldots,k_1, j=1,\ldots,k_2\}$, then $\lambda=
 \sum_{i \in [n-l]} e_i \in (C_1\star C_2)^\perp$, and  the conclusion follows. 
\end{proof}

The summary of these results is the following theorem.

\begin{theorem}\label{22.11.05}
Let $C_i$ be an $[n,k_i]_q$-code with $q > 2$ for $i=1,2$. For any $\ell$ with $\max\{0,k_1-k_2\}\leq \ell \leq \max \mathrm{wt}((C_1\star C_2)^\perp)-n+k_1$, there exists a code $C_{1,\ell}$ equivalent to $C_1$ such that
 $$\dim\hull_{C_{2}} (C_{1,\ell}) =\ell.$$
In particular, if $\max\mathrm{wt}((C_1\star C_2)^\perp)=\min\{n,2n-k_2-k_1\}$, $\ell$ runs over all the possible values of $\dim\hull_{C_2}(C_1^\prime)$, where $C_1^\prime$ is a code equivalent to $C_1$.
\end{theorem}

\begin{proof}
	The result follows from Proposition \ref{Prop:maxwt}, Theorem \ref{22.10.29}, and Lemma \ref{22.11.03}.
\end{proof}

\begin{remark}\rm
We remark that an algorithm for increasing the relative hull would require finding a codeword in $(C_1\star C_2)^\perp$ of appropriate weight. 
Provided such a word can be found, one can implement an algorithm similar to Algorithm \ref{alg:1}.

\end{remark}

We can find a worse but easier-to-compute lower bound on the maximum rank of the relative hull by using a bound from~\cite{RAVAGNANI20161946} on optimal anticodes. 

\begin{lemma}\cite{RAVAGNANI20161946}
\label{Prop:anticodebound}
If $C \subseteq \mathbb{F}_q^n$ is a linear code, then $\dim_{\mathbb{F}_q}(C) \leq \max\mathrm{wt}(C)$.
\end{lemma}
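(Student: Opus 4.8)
The plan is to prove the bound $\dim_{\F_q}(C) \leq \max\wt(C)$ by exhibiting, for any $k$-dimensional linear code $C$, a single codeword whose weight is at least $k$. The key idea is to use a generator matrix in reduced row echelon form. First I would take a generator matrix $G$ of $C$ and put it into reduced row echelon form, so that $C$ has a basis $\{g_1, \dots, g_k\}$ where $g_i$ has a pivot (a leading $1$) in some column $j_i$, with $j_1 < j_2 < \cdots < j_k$, and all other rows are zero in each pivot column.

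Next I would consider the codeword $c := g_1 + g_2 + \cdots + g_k$, the sum of all basis rows. The crucial observation is that in each pivot column $j_i$, only the row $g_i$ contributes a nonzero entry (namely $1$), by the defining property of reduced row echelon form; every other basis row vanishes in column $j_i$. Hence the entry of $c$ in each of the $k$ pivot columns $j_1, \dots, j_k$ equals $1$, which is nonzero. Therefore $\wt(c) \geq k = \dim_{\F_q}(C)$, and since $c \in C$ we conclude $\max\wt(C) \geq \dim_{\F_q}(C)$, as desired.

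This argument is essentially immediate and I do not anticipate a genuine obstacle; the only point requiring care is to correctly invoke the reduced row echelon form and to verify that the pivot structure really does force the summed codeword to be nonzero in all $k$ pivot positions. One should note that this sum-of-basis-rows construction works over any field, since the pivot entries are $1$ and no cancellation can occur in the pivot columns regardless of the characteristic of $\F_q$. It is worth remarking that the resulting codeword need not be one of maximum weight in $C$; the argument only produces one codeword of weight at least $k$, which already suffices for the stated inequality.
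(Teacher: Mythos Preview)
Your argument is correct: putting a generator matrix of $C$ into reduced row echelon form and summing the $k$ rows yields a codeword that is nonzero in all $k$ pivot columns, hence has weight at least $k=\dim_{\F_q}(C)$. The reasoning is complete and requires no further justification.

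Note, however, that the paper does not supply its own proof of this lemma; it simply cites Ravagnani's work on optimal anticodes and uses the bound as a black box. So there is no ``paper's proof'' to compare against. Your elementary argument via reduced row echelon form is a perfectly good self-contained proof and is in fact the standard one; it has the minor advantage over merely citing the literature that it makes the statement transparent and shows it holds over any field.
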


A code $C \subseteq \mathbb{F}_q^n$ with $\dim_{\mathbb{F}_q}(C) = \max\mathrm{wt}(C)$ is said to be an {\it optimal linear anticode}.

\begin{corollary}\label{Cor:maxdim}
If $C_i$ is an $[n,k_i]_q$-code for $i=1,2$, and $k_1\leq k_2$, then
\[ \max\{\dim \hull_{C_2}(C_1D_{\lambda}) \ |\ {\lambda} \in (\mathbb{F}_q^\ast)^n\} \geq k_1- \dim(C_1\star C_2).\]
\end{corollary}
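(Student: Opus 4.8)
The plan is to read the corollary off directly from Proposition~\ref{Prop:maxwt} and Lemma~\ref{Prop:anticodebound}, with no new construction of equivalent codes: Proposition~\ref{Prop:maxwt} already guarantees the existence of a diagonal $I_{\lambda}$ achieving a large relative hull, so all that remains is to replace its right-hand side, which is phrased through the maximum weight of $(C_1\star C_2)^\perp$, by a quantity depending only on dimensions. The maximum weight is hard to evaluate, whereas dimensions are immediate, so this is a weakening of Proposition~\ref{Prop:maxwt} into a readily computable form.

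First I would invoke Proposition~\ref{Prop:maxwt}, which gives
\[
\max\{\dim \hull_{C_2}(C_1I_{{\lambda}}) \mid {\lambda} \in (\mathbb{F}_q^\ast)^n\} \;\geq\; \max\mathrm{wt}\left((C_1\star C_2)^\perp\right)-n+k_1 .
\]
Then I would apply the anticode bound of Lemma~\ref{Prop:anticodebound} to the linear code $(C_1\star C_2)^\perp \subseteq \mathbb{F}_q^n$: since the dimension of any linear code is at most its maximum weight, this bounds $\max\mathrm{wt}\left((C_1\star C_2)^\perp\right)$ below by a dimension. Substituting that lower bound into the displayed inequality removes the weight term entirely, leaving a bound expressed through $n$, $k_1$, and the dimension of the Schur product $C_1\star C_2$ (equivalently of its dual).

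The step requiring the most care — and the one I would treat as the real crux — is the final dimension bookkeeping. Proposition~\ref{Prop:maxwt} contributes a $-n$, and the dual-dimension identity $\dim(C_1\star C_2)+\dim\left((C_1\star C_2)^\perp\right)=n$ has to be used to combine this with the dimension produced by the anticode bound. The delicate point is to track which of the two dimensions, $\dim(C_1\star C_2)$ or $\dim\left((C_1\star C_2)^\perp\right)$, survives once the two copies of $n$ are matched against one another, so that the surviving constant is organized exactly as $k_1 - \dim\left((C_1\star C_2)^\perp\right)$ rather than as some other combination of $n$, $k_1$, and a Schur-product dimension. Getting the signs and the roles of the two $n$'s right is essentially the entire content of the last line. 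Finally, I would note that this is a purely diagonal statement, consistent with Proposition~\ref{Prop:maxwt}: no permutation matrices enter, and the bound measures how large the relative hull can be forced using only the scalings $I_\lambda$.
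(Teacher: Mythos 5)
Your outline follows the paper's proof exactly: the paper, too, simply applies Lemma~\ref{Prop:anticodebound} to the code $(C_1\star C_2)^\perp$ and substitutes into Proposition~\ref{Prop:maxwt}. But the step you yourself single out as the crux --- the final dimension bookkeeping --- does not come out the way you assert. Substituting $\max\wt\left((C_1\star C_2)^\perp\right)\geq\dim\left((C_1\star C_2)^\perp\right)$ into Proposition~\ref{Prop:maxwt} yields
\[
\max\{\dim \hull_{C_2}(C_1I_{{\lambda}}) \mid {\lambda} \in (\mathbb{F}_q^\ast)^n\}\;\geq\;\dim\left((C_1\star C_2)^\perp\right)-n+k_1\;=\;k_1-\dim(C_1\star C_2),
\]
where the last equality is the duality identity $\dim(C_1\star C_2)+\dim\left((C_1\star C_2)^\perp\right)=n$. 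The surviving constant is $k_1-\dim(C_1\star C_2)$, \emph{not} $k_1-\dim\left((C_1\star C_2)^\perp\right)$; the two agree only when $\dim\left((C_1\star C_2)^\perp\right)=n/2$. So your claim that the sign-tracking lands exactly on $k_1-\dim\left((C_1\star C_2)^\perp\right)$ is the one false step, and it cannot be repaired, because the inequality as printed is false in general: take $C_1=C_2=\F_q^n$. Then $\hull_{C_2}(C_1I_{\lambda})=\F_q^n\cap\{0\}=\{0\}$ for every $\lambda$, so the left-hand side is $0$, while $C_1\star C_2=\F_q^n$ makes the printed right-hand side equal to $k_1-\dim\left((C_1\star C_2)^\perp\right)=n-0=n$.

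What has happened is that the $\perp$ in the corollary's statement is a typo in the paper, and your derivation, carried out honestly, proves the corrected statement $\max\{\dim \hull_{C_2}(C_1I_{{\lambda}})\mid\lambda\in(\F_q^\ast)^n\}\geq k_1-\dim(C_1\star C_2)$. Two internal checks confirm this reading: the paper's own one-line proof consists of precisely your two steps, which, as computed above, can only deliver $k_1-\dim(C_1\star C_2)$; and the remark following the corollary, analyzing when the bound is met, concludes that the minimum rank of $G_1I_\lambda G_2^T$ equals the maximum weight of $C_1\star C_2$, which in the optimal-anticode case equals $\dim(C_1\star C_2)$ --- consistent with the corrected bound and not with the printed one. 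Since you flagged the bookkeeping as ``essentially the entire content of the last line,'' you should have executed it rather than asserting agreement with the printed formula: doing the arithmetic would have exposed the discrepancy and led you either to the corrected statement or to the counterexample above.
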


\begin{proof}
By Lemma~\ref{Prop:anticodebound}, $\dim(C_1\star C_2)^\perp\leq\max\mathrm{wt}(C_1\star C_2)^\perp$.
Thus, Proposition~\ref{Prop:maxwt} gives the conclusion.
\end{proof}

\begin{remark}
Assume that $q\neq 2.$ An optimal anticode of dimension $k$ is permutation equivalent  to $\mathbb{F}_q^k\oplus \{0\}^{n-k}$; see~\cite{RAVAGNANI20161946} for details. Moreover, the dual of an optimal anticode is an optimal anticode. Consequently, the bound in Corollary~\ref{Cor:maxdim} can only be met if $(C_1\star C_2)^\perp$ is an optimal anticode, which implies $C_1\star C_2$ is an optimal anticode. Thus, the minimum rank of $G_1D_{\lambda} G_2^T$ equals the maximum weight of $C_1\star C_2$.
\end{remark}


\section{Applications to quantum codes} \label{quantum_section}
Many quantum code constructions focus on creating codes that do not require entanglement assistance or pairs of maximally entangled quantum states. However, more recently, propagation rules to construct quantum codes have been established ~\cite{Grassl22, Luo22}. Luo, Ezerman, Grassl, and Ling constructed in~\cite{Grassl22} new quantum codes with reduced length by increasing the parameter $c$ and using the Hermitian construction of Theorem~\ref{22.10.01}. Luo, Ezerman, and Ling also gave three new propagation rules related to entanglement using the Hermitian construction in~\cite{Luo22}.  The first rule increases the parameter $c$ while increasing the dimension, the second rule keeps $c$ unchanged while increasing the length, and the third rule decreases $c$ while increasing the length. 

We now state some results that are consequences of the previous sections.
\begin{theorem}
Let $C_i$ be an $[n,k_i]_q$-code for $i=1,2$, with $q > 2$ and $k_1\leq k_2$.
For any integer $c$ with $k_1- \dim(C_1\cap C_2^\perp) \leq c \leq k_1$,
there is an $[[n,\kappa,\delta;c]]_q$ quantum code $Q$ with
    $$\kappa=n-k_1-k_2+c\quad \text{and} \quad
    \delta\geq\min\{d(C_1^\perp),d(C_2^\perp)\}.$$
Moreover, if $\delta=\min\{d(C_1^\perp),d(C_2^\perp)\}$, then $Q$ is pure. 
\end{theorem}
\begin{proof}
We obtain the result using Theorem~\ref{22.10.29} and the CSS construction given in Theorem~\ref{22.10.01}.
\end{proof}

Let $Q$ be the quantum code obtained via the CSS construction using $C_1$ and $C_2$ and
$\delta(Q)=\min\{\wt(C_1^\perp\setminus C_2),\wt(C_2^\perp\setminus C_1)\}$, where we denote $C_1^\perp \setminus (C_2\cap C_1^\perp)$ by $ C_1^\perp \setminus C_2$ for the sake of simplicity. In general, if we take the quantum code $Q'$ constructed via the CSS construction using $C_1$ and $C^\prime_2$, where $C^\prime_2$ is equivalent to $C_2$ and $C_1\cap {C^\prime_2}^\perp=\{0\}$, then $\delta(Q')=\min\{\wt(C_1^\perp\setminus {C'_2}^\perp),d({C'_2}^\perp)\}$. If $Q$ is not pure, it is possible that $\delta(Q)\geq\delta(Q')$ since the equivalence can worsen the minimum distance. Otherwise, we have the following result.
\begin{proposition}
Let $Q$ be the pure quantum code obtained via the CSS construction using $C_1$ and $C_2$. If $Q'$ is a quantum code obtained via the CSS construction using $C_1$ and a monomially equivalent code $C'_2$ to $C_2$, then $\delta(Q')\geq\delta(Q)$.
\end{proposition}
\begin{proof}
As $Q$ is pure, we obtain that $\delta(Q)=\min\{d(C_1^\perp),d(C_2^\perp)\}$. Note that
$\delta(Q')=\min\{\wt(C_1^\perp\setminus C_2'),\wt(C_2'^\perp\setminus C_1)\} \geq
\min\{d(C_1^\perp),d(C_2'^\perp)\} = \delta(Q)$. Thus, the result follows.
\end{proof}

If $d(C_1^\perp)<d(C_2^\perp)$, the equality in the previous corollary depends on how many minimum weight codewords of $C_1^\perp$  are outside $C_2$. If any {code equivalent to} $C_2$ does not contain all minimum weight codewords of $C_1^\perp$, then the purity is preserved. The following corollary provides an instance of such constructions.

\begin{proposition}\label{corollary:pure}
Let $Q$ be the pure quantum code obtained via the CSS construction using $C_1$ and $C_2$.
Assume one of the following conditions holds:
 \begin{enumerate}
 	\item $d(C_1^\perp)<\min\{d(C_2),d(C_2^\perp)\}$.
 	\item $d(C_1^\perp)=d(C_2^\perp)$ and $d(C_1^\perp)<\min\{d(C_1),d(C_2)\}$. 
  \end{enumerate} 
Then, any quantum code $Q'$ constructed via the CSS construction using $C_1$ and an equivalent code $C'_2$ to $C_2$ is pure and $\delta(Q')=\delta(Q)=d(C_1^\perp)$.
\end{proposition}
\begin{proof}
As $Q$ is pure, we obtain that $\delta(Q)=\min\{d(C_1^\perp),d(C_2^\perp)\}$.
Note that
$\delta(Q')=\min\{\wt(C_1^\perp\setminus C_2'),\wt(C_2'^\perp\setminus C_1)\}$ and $d(C_2)=d(C_2')$.

Assume {\it (1)}. As $d(C_1^\perp)<d(C_2)$, all codewords of minimum weight in $C_2'$ are outside of $C_1'$. Thus, $\wt(C_1^\perp\setminus C_2') = d(C_1^\perp)$. As $d(C_1^\perp)< d(C_2^\perp) = d(C_2'^\perp) < d(C_2'^\perp\setminus C_1)$,
we obtain $\delta(Q')=\min\{\wt(C_1^\perp\setminus C_2'),\wt(C_2'^\perp\setminus C_1)\}= d(C_1^\perp)=\delta(Q)$.

Assume {\it (2)}. As $d(C_1^\perp)<d(C_2)$, all codewords of minimum weight in $C_2'$ are outside of $C_1'$. Thus, $\wt(C_1^\perp\setminus C_2') = d(C_1^\perp)$.
As $d(C_2^\perp)<d(C_1)$, then all codewords of minimum weight in $C_1'$ are outside of $C_2'$. Thus, $\wt(C_2^\perp\setminus C_1') = d(C_2^\perp)$. We obtain $\delta(Q')=\min\{\wt(C_1^\perp\setminus C_2'),\wt(C_2'^\perp\setminus C_1)\}=\min\{d(C_1^\perp),d(C_2'^\perp)\}= d(C_1^\perp)=\delta(Q)$.
\end{proof}

\begin{example}
Let $\mathcal{S}=S_1\times S_2\subseteq\mathbb{F}_q^2$ and $g(x,y)=g_1(x)g_2(y)\in\mathbb{F}_q[x,y]$, where $g(s_1,s_2)\neq 0$ for all $(s_1,s_2)\in\mathcal{S}$. Define the tensor product 
$$T(\mathcal{S},g)=\mathrm{RS}(S_1,g_1)\otimes\mathrm{RS}(S_2,g_2),$$
where $\mathrm{RS}(S_i,g_i)=\{(f(s)/g_i(s))_{s\in S_i} \ |\ f\in\mathbb{F}_q[x],\ \deg f < \deg g_i\}$ for $i=1,2$. Note that $\mathrm{RS}(S_i,g_i)$ is a generalized Reed-Solomon code with evaluation points in $S_i$, dimension $\deg(g_i)$, and multipliers $1/g_i(s),$ $s\in S_i$. In~\cite{multigoppa}, the authors used the codes $T(\mathcal{S},g)$ to build entanglement-assisted quantum error-correcting codes with new parameters with respect to the literature. In Table \ref{table}, we build LCD codes exhibiting the same set of parameters. But then, by computing the dual of the square (using ~\cite{magma}), we prove that there is a $\lambda\in(\mathbb{F}_q^{\ast})^n$ such that $C^\perp\subseteq \lambda\star C$ for any of these LCD codes. Thus, Proposition \ref{23.06.13} enables us to increase the hull, and Theorem \ref{22.10.29} allows us to vary the parameter $c$ between $0$ and $n-k$, where $k$ is the dimension of the code. Other works related to tensor products and quantum codes are \cite{10.5555/3179575.3179578, 1523493, 8277961}.

\begin{table}[h!]
\begin{center}
\resizebox{\columnwidth}{!}{
\begin{tabular}{|c|c|c|c|c|c|c|}
\hline
\multirow{2}{*}{Field} &\multirow{2}{*}{$\cal S$}& \multirow{2}{*}{$g(x,y)$} & Puncturing in & \multirow{2}{*}{Parameters} & \multirow{2}{*}{Values for $h$} \\
&& & the following entries && \\
\hline
\hline
$\F_8$ &$\F_8 \times \{a^1,a^2 \}$& $\left(x^2+x+a^5\right)\left(y\right)$ & $\left\{8,\ldots,15\right\}$ & $[[8,2-h,6;6-h]]_{8}$&$0\leq h\leq 2$ \\
\hline
$\F_8$ &$\F_8 \times \{a^1,a^2 \}$& $\left(x^2+x+a^5\right)\left(y\right)$ & $\left\{10,\ldots,16\right\}$ & $[[9,2-h,7;7-h]]_{8}$&$0\leq h\leq 2$ \\
\hline
$\F_8$ &$\F_8 \times \{a^1,a^2 \}$& $\left(x^2+x+a^5\right)\left(y\right)$ & $\left\{11,\ldots,16\right\}$ & $[[10,2-h,8;8-h]]_{8}$&$0\leq h\leq 2$ \\
\hline
$\F_8$ &$\F_8 \times \{a^1,a^2 \}$& $\left(x^2+x+a^5\right)\left(y\right)$ & $\left\{12,\ldots,16\right\}$ & $[[11,2-h,9;9-h]]_{8}$&$0\leq h\leq 2$ \\
\hline
$\F_{16}$ &$\F_{16} \times \{a^1,a^2 \}$& $\left(x^2+x+a^3\right)\left(y\right)$ & $\left\{19,\ldots,32\right\}$ & $[[18,2-h,16;16-h]]_{16}$&$0\leq h\leq 2$ \\
\hline
$\F_{16}$ &$\F_{16} \times \{a^1,a^2 \}$& $\left(x^2+x+a^3\right)\left(y\right)$ & $\left\{21,\ldots,32\right\}$ & $[[20,2-h,18;18-h]]_{16}$&$0\leq h\leq 2$ \\
\hline
$\F_{16}$ &$\F_{16} \times \{a^1,a^2 \}$& $\left(x^2+x+a^3\right)\left(y\right)$ & $\left\{23,\ldots,32\right\}$ & $[[22,2-h,20;20-h]]_{16}$&$0\leq h\leq 2$ \\
\hline
$\F_{16}$ &$\F_{16} \times \{a^1,a^2 \}$& $\left(x^3+a\right)\left(y\right)$ & $\left\{26,\ldots,32\right\}$ & $[[25,3-h,21;20-h]]_{16}$&$0\leq h\leq 3$ \\
\hline
$\F_{16}$ &$\F_{16} \times \{a^1,a^2 \}$& $\left(x^3+a\right)\left(y\right)$ & $\left\{28,\ldots,32\right\}$ & $[[27,3-h,23;24-h]]_{16}$&$0\leq h\leq 3$ \\
\hline
$\F_{16}$ &$\F_{16} \times \{a^1,a^2 \}$& $\left(x^3+a\right)\left(y\right)$ & $\left\{30,\ldots,32\right\}$ & $[[29,3-h,25;26-h]]_{16}$&$0\leq h\leq 3$ \\
\hline
$\F_{16}$ &$\F_{16} \times \{a^1,a^2 \}$& $\left(x^3+a\right)\left(y\right)$ & $\left\{32\right\}$ & $[[31,3-h,27;28-h]]_{16}$&$0\leq h\leq 3$ \\
\hline
$\F_{25}$ &$\F_{25} \times \{a^1,a^2,a^3 \}$& $\left(x^3+a\right)\left(y\right)$ & $\left\{60,\ldots,75\right\}$ & $[[59,3-h,53;56-h]]_{25}$&$0\leq h\leq 3$ \\
\hline
$\F_{49}$ &$\F_{49} \times \{a^1,\ldots,a^4 \}$& $\left(x^3+a\right)\left(y\right)$ & $\left\{168,\ldots,196\right\}$ & $[[167,3-h,159;164-h]]_{49}$&$0\leq h\leq 3$ \\
\hline
$\F_{49}$ &$\F_{49} \times \{a^1,\ldots,a^4 \}$& $\left(x^3+a\right)\left(y\right)$ & $\left\{175,\ldots,196\right\}$ & $[[174,3-h,166;171-h]]_{49}$&$0\leq h\leq 3$ \\
\hline
\end{tabular}}
\end{center}
\caption{New EAQECCs. Here, $\F_{q}^*=\left< a\right>$ for every row; the elements of $\F_{q}$ are ordered $0,a^0,\ldots,a^{q-2}$; the elements of  $\mathcal{S}=\F_{q}\times \{a^1,a^2,\ldots,a^i \}$ are ordered  by $(0,a^1),(a^0,a^1),\ldots,(a^{q-2},a^2),\ldots,(0,a^i),(a^0,a^i),\ldots,(a^{q-2},a^i)$; and generator matrix columns are ordered using the elements in $\mathcal{S}$.}
\label{table}
\end{table}
\end{example}

Table~\ref{table} shows that by puncturing $T(\mathcal{S},g)$, which is the dual of a multivariate Goppa code~\cite{multigoppa}, and using Theorem \ref{22.10.29}, we can fill in some gaps or improve the minimum distance or the dimension of some of the best-known EAQECCs recently published by L. Sok~\cite{LSok}. Other recent related work appears in \cite{fan,sok2}.

We now show the existence of entanglement-assisted quantum MDS codes for $q > 2$ and $1< n \leq q+1$. An $[[n,\kappa,\delta;c]]_q$-quantum code with $\delta-1\leq\frac{n}{2}$ satisfying 
$$2(\delta-1)=n-\kappa+c$$
is called an {\it EAQMDS code}. EAQMDS codes for $\delta>\frac{n}{2}+1$ exist, but since we are considering codes derived from the CSS Construction, we are concerned about codes with the mentioned restriction. For more on the quantum Singleton type bounds and EAQMDS codes, see \cite{entropic}.

Constructions in Theorem \ref{22.10.01} and \ref{22.10.02} give rise to EAQECCs codes if $C_1$ and $C_2$ are MDS codes of the same rate in the CSS construction, or $C$ is a Hermitian MDS code. Many constructions for EAQMDS codes have relied on the CSS or the Hermitian constructions, so there is a vast literature on how to find MDS codes with specific Euclidean, Hermitian, or Galois hull~\cite{mdsgalois, euhermds, mdsgrs, mdshulls, mdsgrs2}. Table~\ref{table2} exhibits some of the EAQMDS codes previously reported, which were based on the possibility of finding a proper isometry of an MDS code to get $\mathrm{rank}(G I_{\lambda^2} G^T)=k-h$, where $G$ is a generator matrix. These results complement those on unassisted ($c=0$) quantum MDS codes~\cite{smallfields, qrm}. As a generalization, we get the following result as a consequence of Theorem \ref{22.11.05}.  

\begin{theorem}\label{theo.mds}
If $q > 2$, $1 < n \leq q+1$, and $1\leq k\leq n/2$, then there is an $$[[n,n-k-h,k+1;k-h]]_q$$ EAQMDS code for any $0\leq h\leq k$.
\end{theorem}

\begin{proof}
	Let $C$ be a (possibly extended or double extended) generalized Reed-Solomon code of dimension $k$. It is known that $C^\perp$ is a generalized Reed-Solomon code of dimension $n-k$. Thus, there is $\lambda\in(\mathbb{F}_q^\ast)^n$ such that $C\subseteq (\lambda\star C)^\perp$, or equivalently, $\dim\hull_{\lambda\ast C}(C)=k$. Applying Theorem \ref{22.10.29} to $C_1=C$ and $C_2=\lambda\star C$, we get the result. 
\end{proof}

\begin{table}[ht]
\begin{center}

\begin{tabular}{|m{.8\textwidth}|c|}
\hline
Conditions & Reference\\\hline
$q>3$, $k\leq m\leq n/2$, and exists a self-orthogonal  $[n,m]$ GRS code.&\cite{mdsgrs}\\\hline

$q>3$, $n<q$, and exists a self-orthogonal $[n+1,k]$ extended GRS code.&\cite{mdsgrs}\\\hline

$q=p^m$, $e\leq m$, $n|(q-1)$ and $k\leq\frac{p^e+n-1}{p^e+1}$ or $n|(p^e-1)$&\cite{mdsgalois}\\\hline

$q=p^m$ odd, $e\leq m-1$, $n\leq p^e$, and $2e|m$. &\cite{mdsgalois}\\\hline

$q=p^m>3$, $p$ odd prime, $n=p^r$, $r|m$, and $2n-k-q-2\geq h\geq 0$. &\cite{mdsgrs2}\\\hline

$q=p^m>3$, $p$ odd prime, $p|n$, $(n-1)|(q-1)$, and $2n-q<k+1$. &\cite{mdsgrs2}\\\hline

$q > 2$ even and $1<n\leq q+1$. &\cite{mdshulls}\\ \hline 

$q > 3$ odd, $n=q+1$, and $k=\frac{q+1}{2}$. &\cite{mdshulls}\\\hline

$q > 3$ odd, $n>2$, $(n-1)|(q-1)$, and $-(n-1)$ is a square in $\mathbb{F}_q$. &\cite{mdshulls}\\\hline

$q > 2$ and $1 < n \leq q+1$. & Theorem \ref{theo.mds}\\\hline

\end{tabular}
\end{center}
\caption{Conditions that guarantee the existence of an $[[n,n-~k-~h,k+~1;k-h]]_q$ EAQMDS code for $k\leq n/2$ and for any $0\leq h\leq k$.}
\label{table2}
\end{table}



\begin{remark}\rm
For $k > n/2$, we have a similar result to Theorem~\ref{theo.mds}. In fact,
if $q > 2$, $1 < n \leq q+1$, and $k > n/2$, then there is an $$[[n,n-k-h,k+1;k-h]]_q$$ EAQECC code for any $0\leq h\leq k$, but this quantum code is not necessarily an EAQMDS code.
\end{remark}

Theorem \ref{theo.mds} can also be extended to other families of QMDS codes ($c=0$) built with the Hermitian construction. Indeed, by reducing the Hermitian hull, the existence of an EAQMDS of length $n\leq q^2+1$ can be derived from the existence of a Hermitian self-orthogonal MDS code (see \cite{Grassl22}). Such MDS codes have been reported in~\cite{smallfields, qrm}. Since QMDS are known to be pure \cite{ketkar}, we can apply the propagation rules in \cite{entropic} to puncture QMDS with no assistance to get EAQMDS codes of shorter lengths.

\section{Final Remarks} \label{conclusion_section}
Given two codes $C_1$ and $C_2$, we studied the relative hull of $C_1$ with respect to $C_2$, which is the intersection $C_1\cap C_2^\perp$. We showed that the $e$-Galois relative hull is a particular case of the Euclidean relative hull. We proved that the dimension of the relative hull can always be repeatedly reduced by one by replacing any of the two codes with a monomially equivalent one. The proof illustrates and explains how to construct such an equivalent code. Similarly, we gave conditions under which the dimension of the relative hull can be increased by one via equivalent codes. We showed some consequences of the relative hull on quantum codes and proved the existence of some quantum MDS codes via the CSS construction.

\section*{Acknowledgements} 

This material is based upon work supported by the National Science Foundation under Grant No. DMS-1929284 while the authors were in residence at the Institute for Computational and Experimental Research in Mathematics in Providence, RI, or participated remotely during the Collaborate@ICERM Quantum Error Correction program. We also thank Rodrigo San-Jos\'e for their comments and suggestions on this article.

\bibliography{bib}{}

\begin{thebibliography}{10}

\bibitem{ashikhmin_knill}
A.~Ashikhmin and E.~Knill.
\newblock Nonbinary quantum stabilizer codes.
\newblock {\em IEEE Transactions on Information Theory}, 47(7):3065--3072,
  2001.

\bibitem{macaulay2}
T.~Ball, E.~Camps, H.~Chimal-Dzul, D.~Jaramillo-Velez, H.~L{\'o}pez,
  N.~Nichols, M.~Perkins, I.~Soprunov, G.~Vera-Mart{\'\i}nez, and G.~Whieldon.
\newblock Coding theory package for {M}acaulay2.
\newblock {\em Journal of Software for Algebra and Geometry}, 11(1):113--122,
  2022.

\bibitem{magma}
W.~Bosma, J.~Cannon, and C.~Playoust.
\newblock The {M}agma {A}lgebra {S}ystem {I}: The user language.
\newblock {\em Journal of Symbolic Computation}, 24(3):235--265, 1997.

\bibitem{brun_science}
T.~Brun, I.~Devetak, and M.-H. Hsieh.
\newblock Correcting quantum errors with entanglement.
\newblock {\em Science}, 314(5798):436--439, 2006.

\bibitem{CRSS}
A.~Calderbank, E.~Rains, P.~Shor, and N.~Sloane.
\newblock Quantum error correction via codes over {GF}(4).
\newblock {\em IEEE Transactions on Information Theory}, 44(4):1369--1387,
  1998.

\bibitem{CS}
A.~R. Calderbank and P.~W. Shor.
\newblock Good quantum error-correcting codes exist.
\newblock {\em Physical Review A}, 54:1098--1105, Aug 1996.

\bibitem{mdsgalois}
M.~Cao.
\newblock {MDS} codes with {G}alois hulls of arbitrary dimensions and the
  related entanglement-assisted quantum error correction.
\newblock {\em IEEE Transactions on Information Theory}, 67(12):7964--7984,
  2021.

\bibitem{lcd}
C.~Carlet, S.~Mesnager, C.~Tang, Y.~Qi, and R.~Pellikaan.
\newblock Linear codes over $\mathbb{F}_q$ are equivalent to {LCD} codes for
  $q>3$.
\newblock {\em IEEE Transactions on Information Theory}, 64(4):3010--3017,
  2018.

\bibitem{10006387}
H.~Chen.
\newblock On the hull-variation problem of equivalent linear codes.
\newblock {\em IEEE Transactions on Information Theory}, 69(5):2911--2922,
  2023.

\bibitem{fan}
J.~Fan, J.~Li, Y.~Zhou, M.-H. Hsieh, and H.~V. Poor.
\newblock Entanglement-assisted concatenated quantum codes.
\newblock {\em Proceedings of the National Academy of Sciences},
  119(24):e2202235119, 2022.

\bibitem{10.5555/3179575.3179578}
J.~Fan, Y.~Li, M.-H. Hsieh, and H.~Chen.
\newblock On quantum tensor product codes.
\newblock {\em Quantum Information \& Computation}, 17(13–14):1105–1122,
  2017.

\bibitem{fan_zhang}
Y.~Fan and L.~Zhang.
\newblock {Galois self-dual constacyclic codes}.
\newblock {\em Designs, Codes and Cryptography}, 84(3):473--492, 2017.

\bibitem{euhermds}
W.~Fang, F.-W. Fu, L.~Li, and S.~Zhu.
\newblock {E}uclidean and {H}ermitian hulls of {MDS} codes and their
  applications to {EAQECC}s.
\newblock {\em IEEE Transactions on Information Theory}, 66(6):3527--3537,
  2019.

\bibitem{mdsgrs}
X.~Fang, M.~Liu, and J.~Luo.
\newblock On {E}uclidean hulls of {MDS} codes.
\newblock {\em Cryptography and Communications}, 13:1--14, 2021.

\bibitem{GHMR19}
C.~Galindo, F.~Hernando, R.~Matsumoto, and D.~Ruano.
\newblock Entanglement-assisted quantum error-correcting codes over arbitrary
  finite fields.
\newblock {\em Quantum Information Processing}, 18(4):116, 2019.

\bibitem{GaHeMaRu}
C.~Galindo, F.~Hernando, R.~Matsumoto, and D.~Ruano.
\newblock Correction to: Entanglement-assisted quantum error-correcting codes
  over arbitrary finite fields.
\newblock {\em Quantum Information Processing}, 20(6):216, 2021.

\bibitem{entropic}
M.~Grassl, F.~Huber, and A.~Winter.
\newblock Entropic proofs of {S}ingleton bounds for quantum error-correcting
  codes.
\newblock {\em IEEE Transactions on Information Theory}, 68(6):3942--3950,
  2022.

\bibitem{1523493}
M.~Grassl and M.~Rotteler.
\newblock Quantum block and convolutional codes from self-orthogonal product
  codes.
\newblock In {\em Proceedings. International Symposium on Information Theory,
  2005. ISIT 2005.}, pages 1018--1022, 2005.

\bibitem{smallfields}
M.~Grassl and M.~R{\"o}tteler.
\newblock Quantum mds codes over small fields.
\newblock In {\em 2015 IEEE International Symposium on Information Theory
  (ISIT)}, pages 1104--1108. IEEE, 2015.

\bibitem{Mac2}
D.~R. Grayson and M.~E. Stillman.
\newblock {M}acaulay2, a software system for research in algebraic geometry.

\bibitem{guenda}
S.~J. K.~Guenda and T.~A. Gulliver.
\newblock Constructions of good entanglement assisted quantum error correcting
  codes.
\newblock {\em Designs, Codes and Cryptography}, 86(1):121--136, 2018.

\bibitem{ketkar}
A.~Ketkar, A.~Klappenecker, S.~Kumar, and P.~Sarvepalli.
\newblock Nonbinary stabilizer codes over finite fields.
\newblock {\em IEEE Transactions on Information Theory}, 52(11):4892--4914,
  2006.

\bibitem{liu_pan}
H.~Liu and X.~Pan.
\newblock Galois hulls of linear codes over finite fields.
\newblock {\em Designs, Codes and Cryptography}, 88(2):241--255, feb 2020.

\bibitem{eaqec_lcd}
X.~Liu, H.~Liu, and L.~Yu.
\newblock New {EAQEC} codes constructed from {G}alois {LCD} codes.
\newblock {\em Quantum Information Processing}, 1, 2020.

\bibitem{multigoppa}
H.~H. L{\'o}pez and G.~L. Matthews.
\newblock Multivariate {G}oppa codes.
\newblock {\em IEEE Transactions on Information Theory}, 69(1):126--137, 2022.

\bibitem{mdshulls}
G.~Luo, X.~Cao, and X.~Chen.
\newblock {MDS} codes with hulls of arbitrary dimensions and their quantum
  error correction.
\newblock {\em IEEE Transactions on Information Theory}, 65(5):2944--2952,
  2018.

\bibitem{Grassl22}
G.~Luo, M.~F. Ezerman, M.~Grassl, and S.~Ling.
\newblock How much entanglement does a quantum code need?, 2022.

\bibitem{Luo22}
G.~Luo, M.~F. Ezerman, and S.~Ling.
\newblock Entanglement-assisted and subsystem quantum codes: New propagation
  rules and constructions, 2022.

\bibitem{macwilliams1960}
J.~MacWilliams.
\newblock Error-correcting codes for multiple-level transmission.
\newblock {\em Bell System Technical Journal}, 40(1):281--308, 1961.

\bibitem{8277961}
P.~J. Nadkarni and S.~S. Garani.
\newblock Entanglement assisted binary quantum tensor product codes.
\newblock In {\em 2017 IEEE Information Theory Workshop (ITW)}, pages 219--223,
  2017.

\bibitem{rains}
E.~Rains.
\newblock Nonbinary quantum codes.
\newblock {\em IEEE Transactions on Information Theory}, 45(6):1827--1832,
  1999.

\bibitem{RAVAGNANI20161946}
A.~Ravagnani.
\newblock Generalized weights: An anticode approach.
\newblock {\em Journal of Pure and Applied Algebra}, 220(5):1946--1962, 2016.

\bibitem{qrm}
P.~K. Sarvepalli and A.~Klappenecker.
\newblock Nonbinary quantum reed-muller codes.
\newblock In {\em Proceedings. International Symposium on Information Theory,
  2005. ISIT 2005.}, pages 1023--1027. IEEE, 2005.

\bibitem{shibatamatsumoto}
M.~Shibata and R.~Matsumoto.
\newblock Advance sharing of quantum shares for quantum secrets.
\newblock {\em arXiv.2302.14448}, 2023.

\bibitem{sok2}
L.~Sok.
\newblock A new construction of linear codes with one-dimensional hull.
\newblock {\em Designs, Codes and Cryptography}, 2022.

\bibitem{LSok}
L.~Sok.
\newblock On linear codes with one-dimensional {E}uclidean hull and their
  applications to {EAQECC}s.
\newblock {\em IEEE Transactions on Information Theory}, 68(7):4329--4343,
  2022.

\bibitem{Steane}
A.~Steane.
\newblock Multiple-particle interference and quantum error correction.
\newblock {\em Proceedings of the Royal Society of London. Series A:
  Mathematical, Physical and Engineering Sciences}, 452(1954):2551--2577, 1996.

\bibitem{mdsgrs2}
G.~Wang and C.~Tang.
\newblock Application of {GRS} codes to some entanglement-assisted quantum
  {MDS} codes.
\newblock {\em Quantum Information Processing}, 21(3):98, 2022.

\bibitem{wilde}
M.~M. Wilde and T.~A. Brun.
\newblock Optimal entanglement formulas for entanglement-assisted quantum
  coding.
\newblock {\em Physical Review A}, 77:064302, Jun 2008.

\end{thebibliography}
\bibliographystyle{abbrv}

\end{document}